\newcommand{\mynewtheorem}[2]{
  \newaliascnt{#1}{dummy}
  \newtheorem{#1}[#1]{#2}
  \aliascntresetthe{#1}
  \expandafter\def\csname #1autorefname\endcsname{#2}
}
\newtheorem{lemma}{Lemma}
\newtheorem{claim}{Claim}
\newtheorem{observation}{Observation}
\newtheorem{definition}{Definition}
\newtheorem{theorem}{Theorem}
\newcommand{\cost}{\mathsf{cost}}
\newcommand{\ie}{\emph{i.e.}}
\definecolor{Red}{rgb}{1, 0 ,0}
\definecolor{Blue}{rgb}{0, 0 ,1}
\newcommand{\red}[1]{{\color{Red}{#1}}}
\newcommand{\blue}[1]{{\color{Blue}{#1}}}
\newcommand{\avms}{\mathbf{\sf avms}}
\newcommand{\mavms}{\mathbf{\sf mavms}}
\newcommand{\ctp}{\mathbf{\sf ctp}}
\newcommand{\A}{(\textsf{L1})}
\newcommand{\B}{(\textsf{L2})}
\newcommand{\C}{(\textsf{L3})}
\newcommand{\width}{\mathsf{width}}
\newcommand{\rspace}{\mathsf{fsp}}
\newcommand{\clear}{\mathsf{clear}}
\newcommand{\play}{\mathcal{P}}
\newcommand{\s}{\mathbf{s}}
\newcommand{\f}{\mathbf{f}}
\newcommand{\W}{\mathcal{W}}
\title{\textbf{The mixed search  game against an agile and visible fugitive is monotone\footnote{Research supported by projects DEMOGRAPH (ANR-16-CE40-0028), ESIGMA (ANR-17-CE23-0010), and the French-German Collaboration ANR/DFG Project UTMA (ANR-20-CE92-0027).}}}
\author{Guillaume Mescoff\thanks{LIRMM, Univ Montpellier, CNRS, Montpellier, France.}
\and Christophe Paul$^{\dagger}$  \and Dimitrios M. Thilikos$^{\dagger}$}
\date{}
\begin{document}

\maketitle

\begin{abstract}
\noindent We consider the mixed search game against an agile and visible fugitive.
This is the variant of the classic fugitive search game on graphs where searchers may be placed to (or removed from) the vertices  or 
 slide along edges. Moreover, the fugitive resides on the edges of the graph and can move at any time along unguarded paths. 
{The \emph{mixed search number against an agile and visible fugitive} of a graph $G$, denoted $\avms(G)$,  is the minimum number of searchers required to capture a fugitive in this graph searching variant.}
Our main result is that this graph searching variant is {\sl monotone} in the sense 
that the number of searchers required for a successful search strategy does not increase if we restrict the search strategies to those that do not permit the fugitive {to} visit an already ``clean'' edge.
This means that mixed search strategies against an agile and visible fugitive can be polynomially 
certified, and therefore that {the problem of deciding, given a graph $G$ and an integer $k,$ whether $\avms(G)\leq k$} is in {\sf NP}. Our proof is based on the introduction of the notion of
{\sl tight bramble}, that serves as an obstruction for the corresponding search parameter.
Our results imply that for a graph $G$, $\avms(G)$ is equal to the Cartesian tree product number of $G$, that is, the minimum $k$ for which $G$ is a minor of the Cartesian product of a tree and a clique on $k$ vertices. 
\end{abstract}
\medskip

\noindent{\bf Keywords:} Graph searching game, bramble, Cartesian tree product number, tree decomposition.

\section{Introduction}

A \emph{search game} on a graph is a two-player game opposing a fugitive and a group of searchers. The searchers win if they capture the fugitive, while the fugitive's objective is to avoid capture. The searchers and the fugitive play in turns. Search games on graphs were originally been introduced by Parsons~\cite{Parsons78Pursu} who defines the \emph{edge search number} of a graph as the minimum number of searchers required to capture a fugitive moving along the edges. To capture the fugitive, a search program is seen as a sequence of searchers' moves. A move consists in either adding a searcher on a vertex, removing a searcher from a vertex or sliding a searcher along an edge. The fugitive is allowed to move along the edges of paths that are free of searchers. The \emph{edge search number} is then the minimum number of searchers required  to  capture the fugitive (that is, the fugitive cannot escape through an unguarded path). 

During the last decades, a large number of variants of {seminal} Parsons' edge search game {have} been studied. 
{Each of}
these variants depends on the fugitive and searchers abilities. For example, the fugitive can be \emph{visible} to   the searchers or \emph{invisible}. It can also be \emph{lazy}, then it stays at its location as long it is not directly threatened by the searchers, or \emph{agile}, then it may change its location at every round. Where the fugitive resides (on edges or on vertices), whether the searchers are allowed to slide on edges or not, how the fugitive is captured are other criteria used to define search game variants (see~\cite{FominT08Anann,Nisse19Netwo} for surveys).

The theory of search games on graphs is strongly connected to the theory of width parameters and graph decompositions. It is well known that the node search number of a graph against an agile but visible fugitive corresponds to the \emph{treewidth} of that graph~\cite{SeymourT93Graph}. The same equivalence also holds for  the case of the node search number against a lazy but invisible fugitive~\cite{DendrisKT97Fugit}. The node search variant where the fugitive is agile and invisible corresponds to the \emph{pathwidth}~\cite{KirousisP86Searc,Kinnersley92Theve,EllisST94Theve}. These three variants are usually defined as node search games as the  {searchers} cannot slide along an edge and the fugitive is located on vertices {rather than on edges.} In the case of agile but visible fugitive (respectively lazy but invisible fugitive), the existence of an escape strategy for the fugitive can be derived from the existence of a \emph{bramble} of large order~\cite{SeymourT93Graph} which is a certificate of large treewidth. Likewise, in the case of agile and invisible, a large \emph{blockage} certifies a large pathwidth and allows an escape strategy~\cite{BienstockRST91Quick}. 

Interestingly, a consequence of the equivalences discussed above between search numbers and width parameters is that the corresponding games can be proved to be \emph{monotone}. Intuitively, a game is monotone if it allows to optimally search a graph without recontamination of the previously cleared part of the graph.  Monotonicity  is an important property as it implies that the problem of 
deciding the corresponding search number of a graph belongs to $\mathsf{NP}$ (see~\cite{FominT08Anann} for a discussion about monotonicity).

\paragraph{Our results.} In this paper, we consider the so-called \emph{mixed search game against an agile and visible fugitive}. In this variant, the fugitive resides on the edges and the searchers can capture the fugitive either by sliding along the edge location it resides or by occupying the two vertices incident to that edge location (see \autoref{sec_mixed_search_games} for formal definitions).
Up to our knowledge, the monotonicity issue   of a mixed search game was only proved in the case of an agile and invisible fugitive~\cite{BienstockS91Monot}. We prove that the mixed search game against an agile and visible fugitive is monotone. To that aim, we first show that the corresponding search number of a graph is equal to the \emph{Cartesian tree product number} of that graph~\cite{Harvey14ontree}, also known as the \emph{largeur d'arborescence}~\cite{ColinDeVerdiere98multi}. To certify these two parameters, we introduce the notion of \emph{loose tree-decomposition} which is a relaxation of the celebrated tree-decomposition associated to tree-width. This allows us to define the notion of \emph{tight bramble} as the min-max counterpart of  Cartesian tree product number. Then the monotonicity of the mixed search game against an agile and visible fugitive is obtained by proving that the existence of a tight bramble allows an escape strategy for the fugitive, while the existence of a losse tree-decomposition of a small width allows to derive a monotone search strategy of small cost. 


\section{Preliminairies}

\subsection{Basic concepts}

In this paper, all graphs are finite, loopless, and without multiple  edges. 
We denote by  $V(G)$ the set of the vertices of $G$ and by $E(G)$ the set of the edges of $G$.
We use notation $xy$ in to denote an edge $e=\{x,y\}$. Given a edge $xy\in E(G)$,
we define $G-xy=(V(G),E(G)\setminus\{xy\}$).
The graph resulting from the removal of a vertex subset $S$ of $V(G)$ is denoted by $G-S$. 
If $S=\{v\}$ is a singleton we write $G-v$ instead of $G-\{v\}$. 
We define   the subgraph of $G$ {\em  induced by} $S$ as the graph $G[S]=G-(V(G)\setminus S)$.  We say that $S$ is {\em  connected} (in $G$) if $G[S]$ is connected.
A subset $S$ of $V(G)$ is a \emph{separator} of $G$ if $G-S$ contains more connected components than $G$. We say that $S$ separates $X\subseteq V(G)$ from $Y\subseteq V(G)$ if $X$ and $Y$ are subsets of distinct connected components of $G-S$. 

Contracting an edge $e=xy$ in a graph $G$ yields the graph $G_{/e}$ where $V(G_{/e})=V(G)\setminus\{x,y\}\cup\{v_e\}$ and $E(G_{/e})=\{uv\in E(G)\mid \{u,v\}\cap\{x,y\}=\emptyset\}\cup\{v_ev\mid xv\in E(G) \mbox{ or } yv\in E(G)\}$.
A graph $H$ is a \emph{minor} of a graph $G$, which is denoted $H\preceq G$, if $H$ can be obtained from $G$ by a series of vertex or edge deletions and edge contractions. It is well known that if $H\preceq G$, then there exists a \emph{minor model} of $H$ in $G$, that is a function $\rho:V(H)\rightarrow 2^{V(G)}$ such that: 
\begin{enumerate}
\item  for every $x\in V(H)$, the set $\rho(x)$ is connected in  $G$; and  
\item for every distinct vertices $x,y\in V(H)$, $\rho(x)\cap\rho(y)=\emptyset$; and 
\item for every edge $xy\in E(H)$, there exists an edge $x'y'\in E(G)$ with $x'\in\rho(x)$ and $y'\in\rho(y)$.
\end{enumerate}

Given a set $S$,   a \emph{sequence of subsets of $S$} is defined as $\mathcal{S} = \langle S_1,\dots,S_i,\dots \rangle$ with {$i \geq 1$,}  where for every {$i\geq 1$}, $S_i \subseteq S$. The symmetric difference between two sets $A$ and $B$ is denoted by $A\ominus B$.

A \emph{pathway} $\W$ in a graph $G$ is a sequence $\W=\langle e_1, e_2, \dots, e_{\ell}\rangle$ of edges of $G$ such that if $\ell >1$, then for every $i\in[\ell-1]$, the edges $e_i$ and $e_{i+1}$ are {distinct and} incident to a common vertex (that is, $e_i\cap e_{i+1}=\{v\}$ with $v\in V(G)$). If $\W$ is a pathway starting at edge $e$ and ending at edge $e'$, we say that $\W$ is a $(e,e')$-pathway. Observe that a pathway may contains several occurence of the same edge. 



\subsection{Cartesian tree product number}

\label{sec_ctp}

Let $T$ be a tree and $k$ be a {strictly} positive integer. We use $K_{k}$ in order to denote the complete graph on $k$ vertices. We let $T^{(k)}=T\Box K_k$ denote the \emph{cartesian product} of $T$ with $K_k$, that is: $T^{(k)}$ is the graph with vertex set $\{(x,i) \mid x \in V(T),i \in [k]\}$ and with an edge between $(x,i)$ and $(y,j)$ when $x = y$, or when $xy\in E(T)$ and $i = j$ (see \autoref{fig_cartesian_product}). 


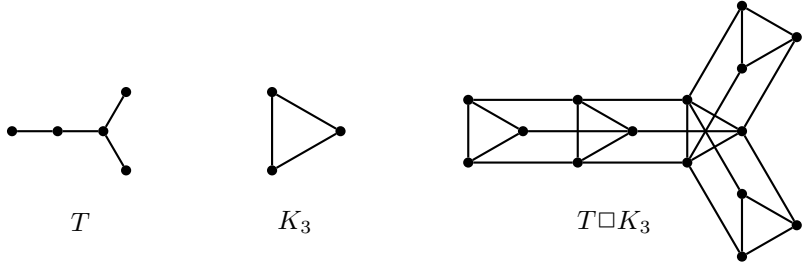
\begin{figure}[h]
\centering
\begin{center}
\begin{tikzpicture}[thick,scale=0.6]
\tikzstyle{sommet}=[circle, draw, fill=black, inner sep=0pt, minimum width=3pt]

\begin{scope}[xshift=-1.2cm]
\draw[] node[sommet] (1) at (0:0){};
\draw[] node[sommet] (2) at (60:1){};
\draw[] node[sommet] (3) at (180:1){};
\draw[] node[sommet] (4) at (300:1){};
\draw[] node[sommet] (5) at (180:2){};
\foreach \i in {2,3,4}{
\draw[-] (1) to (\i) ;
}
\draw[-] (3) to (5) ;
\node[] (K) at (-0.5,-2) {$T$};
\end{scope}

\begin{scope}[xshift=3cm]
\draw[] node[sommet] (a) at (120:1){};
\draw[] node[sommet] (b) at (240:1){};
\draw[] node[sommet] (c) at (0:1){};

\draw[-] (a) to (b) ;
\draw[-] (b) to (c) ;
\draw[-] (c) to (a) ;

\node[] (K) at (0,-2) {$K_3$};
\end{scope}

\begin{scope}[xshift=12cm]
\begin{scope}[shift=(180:4.8)]
\draw[] node[sommet] (a5) at (120:0.8){};
\draw[] node[sommet] (b5) at (240:0.8){};
\draw[] node[sommet] (c5) at (0:0.8){};
\draw[-] (a5) to (b5) ;
\draw[-] (b5) to (c5) ;
\draw[-] (c5) to (a5) ;
\end{scope}

\begin{scope}[shift=(180:2.4)]
\draw[] node[sommet] (a3) at (120:0.8){};
\draw[] node[sommet] (b3) at (240:0.8){};
\draw[] node[sommet] (c3) at (0:0.8){};
\draw[-] (a3) to (b3) ;
\draw[-] (b3) to (c3) ;
\draw[-] (c3) to (a3) ;
\end{scope}

\begin{scope}[shift=(0:0)]
\draw[] node[sommet] (a1) at (120:0.8){};
\draw[] node[sommet] (b1) at (240:0.8){};
\draw[] node[sommet] (c1) at (0:0.8){};
\draw[-] (a1) to (b1) ;
\draw[-] (b1) to (c1) ;
\draw[-] (c1) to (a1) ;
\end{scope}

\begin{scope}[shift=(60:2.4)]
\draw[] node[sommet] (a2) at (120:0.8){};
\draw[] node[sommet] (b2) at (240:0.8){};
\draw[] node[sommet] (c2) at (0:0.8){};
\draw[-] (a2) to (b2) ;
\draw[-] (b2) to (c2) ;
\draw[-] (c2) to (a2) ;
\end{scope}

\begin{scope}[shift=(300:2.4)]
\draw[] node[sommet] (a4) at (120:0.8){};
\draw[] node[sommet] (b4) at (240:0.8){};
\draw[] node[sommet] (c4) at (0:0.8){};
\draw[-] (a4) to (b4) ;
\draw[-] (b4) to (c4) ;
\draw[-] (c4) to (a4) ;
\end{scope}

\foreach \i in {2,3,4}{
\draw[-] (a1) to (a\i) ;
\draw[-] (b1) to (b\i) ;
\draw[-] (c1) to (c\i) ;
}
\draw[-] (a5) to (a3) ;
\draw[-] (b5) to (b3) ;
\draw[-] (c5) to (c3) ;

\node[] (H) at (-2,-2) {$T\Box K_3$};

\end{scope}
\end{tikzpicture}
\end{center}
\caption{The cartesian product  $T^{(3)}=T\Box K_3$ of a tree $T$ and the clique $K_3$.
}
\label{fig_cartesian_product}
\end{figure}

\begin{definition} \cite{Harvey14ontree,ColinDeVerdiere98multi}
The \emph{cartesian tree product number} of a graph $G$ is
$$\ctp(G)=\min\{k\in\mathbb{N}\mid G\preceq T^{(k)}\}.$$
\end{definition}

{Observe that for any tree $T$, as $T=T\Box K_1$, $\ctp(T)=1$.}
In~\cite{Harvey14ontree}, it is shown that the cartesian tree product number of a graph is equal to the \emph{largeur d'arborescence} of that graph, a parameter introduced by Yves Colin De Verdière in~\cite{ColinDeVerdiere98multi}.

\subsection{Loose tree-decomposition}

The concept of \emph{loose tree-decomposition} is a relaxation of the celebrated \emph{tree-decomposition}~\cite{Halin76S-func,RobsertsonS84GMIII} associated to the tree-width of a graph. 

\begin{definition}[Loose tree-decomposition] \label{def_ltd}
Let $G=(V,E)$ be a graph. A \emph{loose tree-decomposition} is a  pair ${\cal D}=(T,\chi)$ such that $T$ is a tree and $\chi:V(T)\to 2^{V(G)}$ satisfying the following properties:
\begin{enumerate}
\item[\emph{\A}] for every vertex $x \in V(G)$, the set $\{t\in V(T)\mid x\in \chi(t)\}$ induces a non-empty connected subgraph, say $T_{x}$, of $T$. We call $T_{x}$ the {\em trace} of $x$ in ${\cal D}$;
\item[\emph{\B}] for every edge $e = xy \in E(G)$, there exists a tree-edge $f=\{t_{1},t_{2}\}\in E(T)$ such that $e \in E(G[\chi(t_1)\cup\chi(t_2)])$;
\item[\emph{\C}] for every tree-edge $f=\{t_{1},t_{2}\}$, $\big|E(G[\chi(t_1)\cup\chi(t_2)]) \setminus  \big(E(G[\chi(t_1)])\cup E(G[\chi(t_2)])\big)\big|\leq 1$.
\end{enumerate}
We refer to the vertices of $T$ as the {\em nodes} and to their images via $χ$ as the {\em bags} of the loose tree decomposition ${\cal D}=(T,χ)$. 
The {\em width} of a loose tree-decomposition $\mathcal{D}=(T,\chi)$ of a graph $G$ 
is defined as ${\sf width}(\mathcal{D},G)=\max\{|\chi(t)|\mid t\in V(T)\}$. 
\end{definition}

\autoref{fig_ltd} provides an example of a loose tree-decomposition. Hereafter, if $\mathcal{D}=(T,\chi)$ is a loose tree-decomposition of $G$, and $xy\in E(G[\chi(t_1)\cup\chi(t_2)]) \setminus  (E(G[\chi(t_1)])\cup E(G[\chi(t_2)]))$ for some adjacent nodes  $t_1$ and $t_2$ (see condition \C{} above), then $xy$ is called a \emph{marginal edge}. We now examine important properties of loose tree-decompositions that will be used in further proofs.

\begin{figure}[ht]
\begin{center}
\begin{tikzpicture}[thick,scale=1]
\tikzstyle{sommet}=[circle, draw, fill=black, inner sep=0pt, minimum width=3pt]
          
\begin{scope}[yshift=-0.25cm,scale=1.2]
\draw node[sommet] (a) at (90:1){};
\draw node[sommet] (d) at (210:1){};
\draw node[sommet] (f) at (330:1){};
\draw node[sommet] (e) at (270:0.5){};
\draw node[sommet] (b) at (150:0.5){};
\draw node[sommet] (c) at (30:0.5){};

\node[] at (90:1.2) {$a$};
\node[] at (150:0.7) {$b$};
\node[] at (30:0.7) {$c$};
\node[] at (210:1.2) {$d$};
\node[] at (270:0.7) {$e$};
\node[] at (330:1.2) {$f$};

\draw (90:1) -- (210:1);
\draw[line width=2pt] (90:1) -- (30:0.5);
\draw[] (30:0.5) -- (330:1);
\draw (210:1) -- (330:1);
\draw (150:0.5) -- (30:0.5);
\draw (150:0.5) -- (270:0.5);
\draw (270:0.5) -- (30:0.5);

\end{scope}

\begin{scope}[xshift=5cm,scale=0.9]
\node[] (a) at (0,0){};
\node[] (b) at (2,0){};
\node[] (c) at (4,1){};
\node[] (d) at (4,-1){};
\node[] (e) at (6,1){};
\node[] (f) at (6,-1){};

\draw (a) -- (b) ;
\draw (b) -- (c) ;
\draw (b) -- (d) ;
\draw (c) -- (e) ;
\draw (d) -- (f) ;

\draw[fill=white] (a) circle (0.5) ;
\node[] at (a) {$\mathbf{a},b$};

\draw[fill=white] (b) circle (0.6) ;
\node[] at (b) {$b,\mathbf{c},e$};

\draw[fill=white] (c) circle (0.5) ;
\node[] at (c) {$b,e$};

\draw[fill=white] (d) circle (0.5) ;
\node[] at (d) {$c,e$};

\draw[fill=white] (e) circle (0.5) ;
\node[] at (e) {$e,d$};

\draw[fill=white] (f) circle (0.5) ;
\node[] at (f) {$e,f$};

\end{scope}

\end{tikzpicture}
\end{center}
\caption{\label{fig_ltd} A loose tree-decomposition (on the right) $\mathcal{D}$ of the $3$-sun $G$ (graph on the left). Observe that the edge $ac$ is a marginal edge as no  {bag} contains the two vertices $a$ and $c$, they are however contained in two {adjacent} bags. Note that $bd$ and $cf$ are also marginal edges. We have $\width(\mathcal{D},G)=3$. Notice that if $G'=G-\{a\},$ then $G'$ has a loose tree decomposition of width 2. This decomposition is obtained from $\mathcal{D}$  by removing the bags $\{a,b\}$ and $\{b,c,e\}$ and by making adjacent the nodes corresponding to the bags $\{b,e\}$ and $\{c,e\}$. Notice that in this new loose tree decomposition, the edge $bc$ is a marginal edge.}
\end{figure}
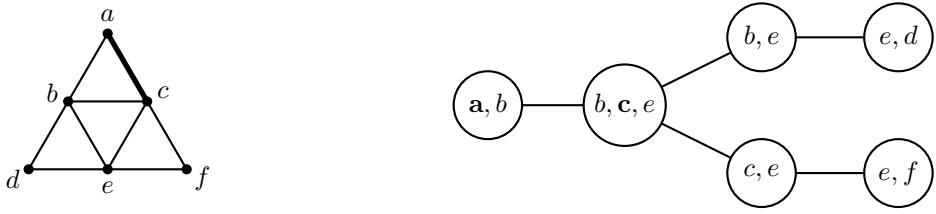

A loose tree-decomposition $\mathcal{D}=(T,\chi)$ of width $k\in\mathbb{N}$ is \emph{full} if for every node $t\in V(T)$, $|\chi(t)|=k$ and if for every pair of adjacent nodes $t$ and $t'$ in $T$, $\chi(t)\ominus\chi(t')=\{x,x'\}$ with $x\in \chi(t)$ and $x'\in\chi(t')$. The next lemma shows that every loose tree-decomposition can be turned into a full one of the same width.

\begin{lemma} \label{lem_full_ltd}
Given a loose tree-decomposition $\mathcal{D}=(T,\chi)$ of a graph $G$, one can compute a full loose tree-decomposition $\mathcal{D}'=(T',\chi')$ of $G$ such that ${\width}(\mathcal{D},G)={\width}(\mathcal{D}',G)$.
\end{lemma}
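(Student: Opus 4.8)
The plan is to transform $\mathcal{D}=(T,\chi)$ into a full loose tree-decomposition in two stages: first make every bag have size exactly $k=\width(\mathcal{D},G)$, then subdivide tree-edges so that adjacent bags differ by exactly one vertex on each side, while preserving conditions \A{}, \B{}, \C{} and the width throughout.

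For the first stage, I would repeatedly pick a node $t$ with $|\chi(t)|<k$ and enlarge $\chi(t)$. The natural choice is to borrow a vertex from a neighbour: if $t$ has a neighbour $t'$ in $T$, add to $\chi(t)$ some vertex $x\in\chi(t')\setminus\chi(t)$ (if $\chi(t')\subseteq\chi(t)$ one can instead contract the edge $tt'$, or argue $t'$ is redundant). Adding $x$ to $\chi(t)$ keeps \A{} because the trace $T_x$ stays connected (we only added $t$, which is adjacent to $t'\in T_x$); it cannot violate \B{}; and for \C{} one checks that no edge of $G[\chi(t)\cup\chi(t'')]$ for a neighbour $t''$ becomes marginal that was not before — the only new edges live in $G[\chi(t)]$ now, so the marginal-edge count on each incident tree-edge can only drop. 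A leaf with a too-small bag whose bag is contained in its neighbour's can simply be deleted. Iterating (with a suitable potential, e.g. $\sum_t (k-|\chi(t)|)$, which strictly decreases) yields a loose tree-decomposition in which every bag has size exactly $k$. If at some point no node has a bag of size $<k$ but $T$ is a single node, the decomposition is already full.

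For the second stage, take any edge $tt'$ of $T$ with $\chi(t)\ominus\chi(t')\neq\{x,x'\}$ (i.e. the symmetric difference is not a pair with one element on each side); since $|\chi(t)|=|\chi(t')|=k$, the two bags differ in $m\geq 2$ elements on each side, say $\chi(t)\setminus\chi(t')=\{x_1,\dots,x_m\}$ and $\chi(t')\setminus\chi(t)=\{y_1,\dots,y_m\}$. I would replace the edge $tt'$ by a path $t=t_0,t_1,\dots,t_m=t'$ of new nodes, setting $\chi(t_i)=(\chi(t)\setminus\{x_1,\dots,x_i\})\cup\{y_1,\dots,y_i\}$, so consecutive bags differ by swapping one $x_i$ for one $y_i$ and every intermediate bag still has size $k$. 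Conditions \A{} and \B{} are immediate (every vertex of $\chi(t)\cap\chi(t')$ stays in all intermediate bags, so traces stay connected; the ordering of the $x_i$'s and $y_i$'s can be chosen arbitrarily, and no edge of $G$ needs the pair $tt'$ by itself). The key point to verify is \C{}: for each new tree-edge $t_{i-1}t_i$ the union $\chi(t_{i-1})\cup\chi(t_i)$ is a subset of $\chi(t)\cup\chi(t')$ of size $k+1$, so the count of edges in $G[\chi(t_{i-1})\cup\chi(t_i)]$ not already inside $G[\chi(t_{i-1})]$ or $G[\chi(t_i)]$ is at most the count for the original edge $tt'$, hence at most $1$. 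Repeating over all offending tree-edges gives a full loose tree-decomposition, and since no bag ever exceeds size $k$ and some bag of the original $\mathcal{D}$ had size $k$, the width is unchanged.

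The main obstacle I anticipate is bookkeeping in the second stage's verification of \C{}: one must be careful that the single marginal edge of the original edge $tt'$ (if it exists) is "distributed" correctly and does not spawn a second marginal edge on some $t_{i-1}t_i$, and that no pair $\{u,v\}$ that was split across $t$ and $t'$ (hence not an edge constraint there) silently becomes an edge inside some intermediate bag — but since every intermediate bag is a subset of $\chi(t)\cup\chi(t')$, any such $uv$ would already have been counted in the bound for the original tree-edge, so a direct containment argument on the edge sets closes the gap. A minor subtlety is handling degenerate cases (bags contained in neighbours, isolated vertices of $G$ placed in a single bag) to keep the first-stage potential argument clean.
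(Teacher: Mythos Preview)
Your two-stage plan (fill all bags to size $k$, then subdivide offending tree-edges) is exactly the paper's approach, and Stage~1 is essentially identical. Your Stage~2 is in fact cleaner than the paper's: the paper inserts a single node with bag $(\chi(t_1)\cup\chi(t_2))\setminus\{x_1,x_2\}$ and iterates, whereas you build the whole interpolating path at once with every intermediate bag of size exactly $k$.

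There is, however, one genuine slip in your Stage~2. You write that ``the ordering of the $x_i$'s and $y_i$'s can be chosen arbitrarily, and no edge of $G$ needs the pair $tt'$ by itself''. This is false precisely when the original tree-edge $tt'$ carries a marginal edge $uv$ with $u\in\{x_1,\dots,x_m\}$ and $v\in\{y_1,\dots,y_m\}$: that edge \emph{is} only witnessed at $tt'$, and after your subdivision it must still satisfy \B{}. With an arbitrary labeling you may have $u=x_a$ and $v=y_b$ with $a\neq b$; then $u$ lives only in $\chi(t_0),\dots,\chi(t_{a-1})$ and $v$ only in $\chi(t_b),\dots,\chi(t_m)$, so no new tree-edge $t_{i-1}t_i$ has both endpoints in $\chi(t_{i-1})\cup\chi(t_i)$, and \B{} fails for $uv$. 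The fix is immediate: choose the labeling so that $u=x_1$ and $v=y_1$ (this is the content of the paper's choice ``$x_1x_2\notin E(G)$'', read contrapositively). Then $uv$ becomes the unique marginal edge at $t_0t_1$, and \B{} is restored.

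Once the labeling is chosen this way, your \C{} check is actually simpler than you suggest: for each new tree-edge $t_{i-1}t_i$ the symmetric difference is exactly $\{x_i,y_i\}$, so there is at most one candidate for a marginal edge, namely $x_iy_i$; no containment argument is needed. Your ``obstacle'' paragraph already senses the issue (``distributed correctly'') but misdiagnoses it as a \C{} concern; it is a \B{} concern, and the one-line relabeling above closes it.
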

\begin{proof}
Suppose that ${\sf width}(\mathcal{D},G)=k$. We can assume that in $\mathcal{D}$, if two nodes $t,t'$ of $T$ are adjacent, then neither $\chi(t)\subseteq \chi(t')$ nor $\chi(t')\subseteq \chi(t)$. Let $t\in V(T)$ be a node such that $|\chi(t)|=k$. Suppose there exists a node $t'\in V(T)$ adjacent to $t$ such that $|\chi(t')|<k$. As there is at most one marginal edge $e$ between $t$ and $t'$, it is possible to add vertices from $\chi(t)$ that are not incident to $e$ to complete $\chi(t')$ to a bag of size $k$. Observe that completing every bag of $T$ that way does not violate the conditions of \autoref{def_ltd}. We can process all the nodes of $\mathcal{D}$ to obtain a loose tree-decomposition whose bags all have size $k$. Let us assume it is the case for $\mathcal{D}$.

Suppose that $t_1$ and $t_2$ are adjacent nodes in $T$ such that $|\chi(t_1)\cap\chi(t_2)|<k-1$. So there exists $x_1,y_1 \in \chi(t_1) \setminus \chi(t_2)$ and some $x_2,y_2 \in \chi(t_2) \setminus \chi(t_1)$. Since there exists at most one marginal edge between $t_1$ and $t_2$, we can suppose without loss of generality that $x_1x_2\notin E(G)$. Then we can remove from  $T$ the edge $t_1t_2$ and insert a new node $t$ adjacent to $t_1$ and $t_2$ such that $\chi(t)=(\chi(t_1)\cup\chi(t_2))\setminus \{x_1,x_2\}$. Again, one can observe that this transformation does not violate the conditions of \autoref{def_ltd}. 
It follows that processing {$T$ as long as it contains a tree-edge on which this transformation applies, yields} to transform $\mathcal{D}$ into a full loose tree-decomposition $\mathcal{D}$ without increasing the width.
\end{proof}

\begin{lemma} \label{lem_node_sep}
Let $\mathcal{D}=(T,\chi)$ be a full loose tree-decomposition of a graph $G$. For every node $t$ that is not a leaf of $T$, $\chi(t)$ is a separator of $G$.
\end{lemma}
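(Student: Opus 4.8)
The plan is to mimic the classical argument that internal bags of a tree-decomposition are separators, but paying attention to the novelty here, namely the presence of at most one marginal edge per tree-edge. First I would fix a non-leaf node $t$ of $T$, and let $t_1$ be a neighbour of $t$ in $T$. Removing $t$ from $T$ disconnects $T$; let $T_1$ be the component of $T-t$ containing $t_1$ and $T_2 = T - t - V(T_1)$ the (non-empty, since $t$ is not a leaf) union of the remaining components. Set $A = \bigl(\bigcup_{s\in V(T_1)}\chi(s)\bigr)\setminus \chi(t)$ and $B = \bigl(\bigcup_{s\in V(T_2)}\chi(s)\bigr)\setminus \chi(t)$. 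The goal is to show that $A$ and $B$ are both non-empty and that there is no edge of $G-\chi(t)$ with one endpoint in $A$ and one in $B$; this gives that $\chi(t)$ is a separator (indeed that it separates $A$ from $B$).

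The key steps, in order: (1) $A$ and $B$ are non-empty. Since $\mathcal D$ is full, adjacent bags differ by exactly a swap of one element, so for the neighbour $t_1$ there is a vertex $x_1 \in \chi(t_1)\setminus\chi(t)$; by property \A{} its trace $T_{x_1}$ is connected and contains $t_1$ but not $t$, hence $T_{x_1}\subseteq T_1$, so $x_1\in A$. The same argument on a neighbour of $t$ lying in $T_2$ (which exists because $t$ is not a leaf) gives a vertex in $B$. (2) No crossing edge. Suppose $uv\in E(G)$ with $u\in A$, $v\in B$. By property \B{} there is a tree-edge $f=\{s_1,s_2\}$ with $uv\in E(G[\chi(s_1)\cup\chi(s_2)])$, so each of $u,v$ lies in $\chi(s_1)\cup\chi(s_2)$. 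Using the connectedness of the traces $T_u$ and $T_v$ (property \A{}): $T_u$ meets $T_1$ (as $u\in A$ means $u$ appears in some bag of $T_1$) but $u\notin\chi(t)$ forces $T_u\subseteq T_1$; symmetrically $T_v\subseteq T_2$. Hence $f$ would need one endpoint supporting $u$ (so in $T_1$) and one supporting $v$ (so in $T_2$), but a single tree-edge $f$ cannot have one endpoint in $T_1$ and one in $T_2$ since every $T_1$–$T_2$ path in $T$ passes through $t$; the only possibility is that $u$ and $v$ both lie in a single bag $\chi(s_1)$ or $\chi(s_2)$, contradicting $T_u\cap T_v=\emptyset$ (their intersection would be that node). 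This contradiction shows no such edge exists.

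I would then conclude: $G-\chi(t)$ has $A$ and $B$ in distinct components (they are non-empty and not joined by any edge), whereas in $G$ there is at least a path between a vertex of $A$ and a vertex of $B$ through $\chi(t)$—or more simply, $G[A\cup\{$a vertex of $\chi(t)\}]$ style connectivity arguments show the component count strictly increases; in any case $\chi(t)$ separates $A$ from $B$, so it is a separator of $G$ by definition.

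The main obstacle I anticipate is step (2), and in particular making sure the argument does not secretly require property \C{} or the "no crossing edge" to be handled differently because of marginal edges. A marginal edge between adjacent nodes $s_1,s_2$ is precisely an edge $uv$ with $u\in\chi(s_1)\setminus\chi(s_2)$ and $v\in\chi(s_2)\setminus\chi(s_1)$, i.e. $T_u$ and $T_v$ are disjoint; so one must verify that even such a marginal crossing edge cannot straddle $t$. This is exactly what the argument above handles: a marginal edge $uv$ lives on one tree-edge $f=\{s_1,s_2\}$, and that single tree-edge cannot have $s_1\in V(T_1)$ and $s_2\in V(T_2)$, because the unique $T_1$–$T_2$ path in the tree goes through $t$ and $\{s_1,s_2\}\ne\{t,\cdot\}$ with $s_i\notin\{t\}$ (as neither endpoint carries $u$ or $v$ if it equals $t$, since $u,v\notin\chi(t)$). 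So fullness (via \A{} giving the swap structure, hence non-emptiness) plus the tree structure do all the work; \C{} is only needed to guarantee such full loose tree-decompositions are well-behaved, not in this separation argument itself.
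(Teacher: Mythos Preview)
Your proof is correct. The paper's argument reaches the same conclusion via a slightly different route: rather than partitioning $V(G)\setminus\chi(t)$ into an $A$-side and a $B$-side and showing no edge crosses, the paper picks $x_1\in\chi(t_1)\setminus\chi(t)$ and $x_2\in\chi(t_2)\setminus\chi(t)$ (using fullness, as you do), takes an arbitrary $(x_1,x_2)$-path $P$ in $G$, and argues that the union $\bigcup_{x\in V(P)}T_x$ of traces is a connected subtree of $T$ (since, by \B{}, consecutive vertices on $P$ have traces that meet or are linked by a single tree-edge); this subtree contains both $t_1$ and $t_2$, hence contains $t$, so $P$ meets $\chi(t)$. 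Both arguments rest on the same two ingredients---connectedness of traces \A{} and edge coverage \B{}---and neither needs \C{}; your partition-based version is more explicit about how marginal edges are handled, while the paper's path-based version is a touch shorter. Your observation that \C{} plays no role here is exactly right.
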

\begin{proof}
As $t$ is not a leaf, $T$ contains two nodes $t_1$ and $t_2$ adjacent to $t$. As $\mathcal{D}$ is full, there exist $x_1\in \chi(t_1)\setminus\chi(t)$ and $x_2\in\chi(t_2)\setminus \chi(t)$. Let $P$ be a path in $G$ between $x_1$ and $x_2$. Let us recall that every edge of $G$ is either a marginal edge between two neighboring nodes of $T$ or incident to two vertices belonging to the bag of some node of $T$. This implies that the set of nodes $\{t\in V(T)\mid t\in T_x,x\in P\}$ induces a connected subtree of $T$. As $t_1$ and $t_2$ belong to that set, so does $t$. It follows that $P$ intersects $\chi(t)$ and thereby $\chi(t)$ is a separator of $G$.
\end{proof}

\begin{lemma} \label{lem_tree_edge_sep}
Let $\mathcal{D}=(T,\chi)$ be a full loose tree-decomposition of a graph $G$ and let $\{t_1,t_2\}$ be an edge of $T$. If $\chi(t_1)\ominus\chi(t_2)=\{x_1,x_2\}\notin E(G)$, then $\chi(t_1)\cap\chi(t_2)$ is a separator of $G$.
\end{lemma}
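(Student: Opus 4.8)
The plan is to follow the strategy of \autoref{lem_node_sep}: exhibit an explicit separation of $G$ whose interface is $\chi(t_1)\cap\chi(t_2)$. Write $S=\chi(t_1)\cap\chi(t_2)$. Since $\mathcal{D}$ is full, the defining property for adjacent nodes (together with the hypothesis $\chi(t_1)\ominus\chi(t_2)=\{x_1,x_2\}$) gives $\chi(t_1)=S\sqcup\{x_1\}$ and $\chi(t_2)=S\sqcup\{x_2\}$ with $x_1\notin\chi(t_2)$ and $x_2\notin\chi(t_1)$; and by hypothesis $x_1x_2\notin E(G)$. Deleting the tree-edge $\{t_1,t_2\}$ from $T$ splits $T$ into two subtrees $T_1\ni t_1$ and $T_2\ni t_2$. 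Put $A=\bigcup_{t\in V(T_1)}\chi(t)$ and $B=\bigcup_{t\in V(T_2)}\chi(t)$. I would then show that $(A,B)$ is a separation of $G$ with $A\cap B=S$, which finishes the proof since $x_1\in A\setminus S$ and $x_2\in B\setminus S$.

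The verification breaks into three steps. First, $A\cup B=V(G)$: by \A{} every vertex of $G$ has a non-empty trace, hence lies in some bag, hence in $A$ or in $B$. Second, $A\cap B=S$: the inclusion $S\subseteq A\cap B$ is immediate from $S\subseteq\chi(t_1)$ and $S\subseteq\chi(t_2)$; conversely, if $v\in A\cap B$ then $v$ occurs in a bag of $T_1$ and in a bag of $T_2$, so its trace $T_v$ has a vertex in $T_1$ and a vertex in $T_2$ and, being connected by \A{}, must contain the only tree-edge joining $T_1$ and $T_2$, namely $\{t_1,t_2\}$; hence $v\in\chi(t_1)\cap\chi(t_2)=S$. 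Third — the crucial step — no edge of $G$ has one endpoint in $A\setminus S$ and the other in $B\setminus S$: given $uv\in E(G)$, condition \B{} provides a tree-edge $f=\{s_1,s_2\}$ with $u,v\in\chi(s_1)\cup\chi(s_2)$; if $f$ lies in $T_1$ then $\{u,v\}\subseteq A$, if $f$ lies in $T_2$ then $\{u,v\}\subseteq B$, and if $f=\{t_1,t_2\}$ then $\{u,v\}\subseteq\chi(t_1)\cup\chi(t_2)=S\cup\{x_1,x_2\}$, so either $\{u,v\}\subseteq\chi(t_1)\subseteq A$, or $\{u,v\}\subseteq\chi(t_2)\subseteq B$, or $\{u,v\}=\{x_1,x_2\}$ — and this last case is impossible because $x_1x_2\notin E(G)$. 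Thus in every case $u$ and $v$ lie on the same side of the separation.

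Combining the three steps, $V(G)$ is the disjoint union of $A\setminus S$, $B\setminus S$ and $S$, and no edge joins the first two parts, so every path of $G$ with one end in $A\setminus S$ and the other in $B\setminus S$ meets $S$. Since $x_1\in\chi(t_1)\setminus\chi(t_2)\subseteq A\setminus S$ and $x_2\in\chi(t_2)\setminus\chi(t_1)\subseteq B\setminus S$, it follows that $x_1$ and $x_2$ lie in distinct connected components of $G-S$; hence $S=\chi(t_1)\cap\chi(t_2)$ separates $x_1$ from $x_2$ in $G$ and is, in particular, a separator of $G$.

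The only delicate point is the third step, and specifically the subcase where the tree-edge witnessing $uv$ is $\{t_1,t_2\}$ itself: here one genuinely needs both the fullness of $\mathcal{D}$ — which pins down $\chi(t_1)\cup\chi(t_2)$ as $S\cup\{x_1,x_2\}$, so that the only edge this adjacent pair of bags could carry without already lying inside one of $\chi(t_1)$, $\chi(t_2)$ (a marginal edge, cf.\ condition \C{}) is $x_1x_2$ — and the hypothesis $x_1x_2\notin E(G)$ to rule that edge out. Everything else is routine bookkeeping with the trace-connectivity condition \A{}, exactly as in \autoref{lem_node_sep}.
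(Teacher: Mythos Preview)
Your proof is correct and follows essentially the same approach as the paper: both split $T$ at the edge $\{t_1,t_2\}$, collect the bags on each side, and use trace-connectivity \A{} together with condition \B{} and the absence of a marginal edge across $\{t_1,t_2\}$ to conclude that $S=\chi(t_1)\cap\chi(t_2)$ separates $x_1$ from $x_2$. The only cosmetic difference is that you package the argument as exhibiting a full separation $(A,B)$ with $A\cap B=S$, whereas the paper takes an arbitrary $x_1$--$x_2$ path and locates the first edge leaving the $t_1$-side; the underlying case analysis is identical.
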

\begin{proof}
Suppose that $x_1\in\chi(t_1)\setminus\chi(t_2)$ and $x_2\in\chi(t_2)\setminus\chi(t_1)$. We prove that $\chi(t_1)\cap\chi(t_2)$ is a $(x_1,x_2)$-separator. Let $P$ be a path between $x_1$ and $x_2$ in $G$. We let $T_1$ denote the largest subtree of $T$ containing $t_1$ but not $t_2$. Consider the vertex set  $V_1=\{x\in V(G)\mid \exists t\in T_1, x\in \chi(t)\}$. 
{Let $x\in V_1$ and $y\notin V_1$ be the two adjacent vertices of $P$ such that every vertex $z$ of $P$ between $x_1$ and $x$  belongs to $V_1$.} As $x_2\notin V_1$, the vertices $x$ and $y$ are well defined {($x$ may be equal to $x_1$)}. By \autoref{def_ltd}, every edge is either a marginal edge between two adjacent nodes of $T$ or is contained in the bag of some node of $T$. By assumption, there is no marginal edge between $t_1$ and $t_2$. As {$x\in V_1$ and $y\notin V_1$}, we have that $x\in \chi(t_1)\cap\chi(t_2)$, and thereby $\chi(t_1)\cap\chi(t_2)$ intersects $P$.
\end{proof}

\subsection{Tight brambles}
\label{sec_tight_bramble}

{In order to define a notion of obstacle to the existence of a loose tree-decomposition of small width, we} now adapt the well-known definition of bramble that is used in the context of tree-decomposition and treewidth~\cite{SeymourT93Graph}.
Let $G$ be a graph. 
Two subsets $S_1$ and $S_2$ of $V(G)$ are \emph{tightly touching} if either $S_1\cap S_2\neq\emptyset$ or $E(G)$ contains two distinct edges $x_1x_2$ and $y_1y_2$ such that $x_1,y_1\in S_1$ and $x_2,y_2\in S_2$ {(the edges $x_1x_2$ and $y_1y_2$ may share one vertex but not two, as graphs are considered to be simple).}
For simplicity, in this paper we use {\em touching} as a shortcut of the term {\em tightly touching}.

\begin{definition}[Tight bramble] \label{def_tight_bramble}
Let $G$ be a graph. A set $\mathcal{B}\subseteq 2^{V(G)}$ of pairwise touching connected subsets of vertices, each of size at least two, is a \emph{tight bramble} of $G$. A set $S\subseteq V(G)$ is a \emph{cover} of $\mathcal{B}$ if for every set $B\in\mathcal{B}$, $S\cap B\neq\emptyset$. The \emph{order} of the bramble $\mathcal{B}$ is the smallest size of a cover of $\mathcal{B}$. 
\end{definition}



\begin{figure}[ht]
\begin{center}
\begin{tikzpicture}[thick,scale=1]
\tikzstyle{sommet}=[circle, draw, fill=black, inner sep=0pt, minimum width=3pt]
          
\begin{scope}[]

\draw node[sommet] (a) at (90:1){};
\draw node[sommet] (b) at (180:1){};
\draw node[sommet] (c) at (270:1){};
\draw node[sommet] (d) at (0:1){};

\node[] at (90:1.2) {$a$};
\node[] at (180:1.2) {$b$};
\node[] at (270:1.2) {$c$};
\node[] at (0:1.2) {$d$};

\draw (a) -- (b) -- (c) -- (d) -- (a) -- (c) ;
\draw (b) -- (d) ;
\end{scope}

\begin{scope}[shift=(0:-6)]

\draw node[sommet] (a) at (0:1){};
\draw node[sommet] (b) at (60:1){};
\draw node[sommet] (c) at (120:1){};
\draw node[sommet] (d) at (180:1){};
\draw node[sommet] (e) at (240:1){};
\draw node[sommet] (f) at (300:1){};

\node[] at (0:1.2) {$a$};
\node[] at (60:1.2) {$b$};
\node[] at (120:1.2) {$c$};
\node[] at (180:1.2) {$d$};
\node[] at (240:1.2) {$e$};
\node[] at (300:1.2) {$f$};

\draw (a) -- (b) -- (c) -- (d) -- (e) -- (f) -- (a) ;
\end{scope}

\end{tikzpicture}
\end{center}
\caption{\label{fig_tight_bramble} 
On the left, we observe that $\mathcal{B}=\big\{\{a,b\},\{b,c\},\{c,d,e,f,a\}\big\}$ is a tight bramble of order two. On the right,
$\mathcal{B}_1=\big\{\{x,y\}\mid xy\in E(G) \big\}$ is a tight bramble of order $4$ of the $K_4$. Observe that $\mathcal{B}_2=\big\{ \{a\},\{b,c\},\{c,d\},\{b,c\} \big\}$ is a set of pairwise touching connected subsets of $V(K_4)$. However, according to \autoref{def_tight_bramble}, as it contains the singleton set $\{a\}$, is it not a tight bramble. See  \autoref{obs_edgeless} and the related discussion.}

\end{figure}
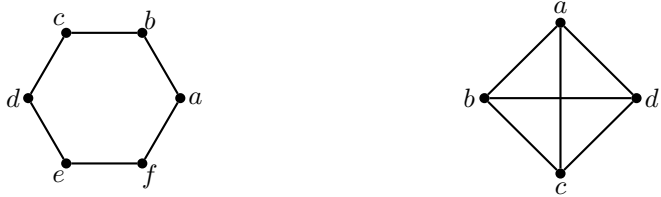

Let us discuss \autoref{def_tight_bramble}. Observe first that, if $G$ is an edge-less graph, then its unique tight bramble is $\mathcal{B}=\emptyset$ which has order $0$. In the rest of the paper, we will not consider edge-less graphs as searching a fugitive (located on edges) in such a graph would create  degenerate situations. The next observation provides a characterization of trees by means of tight brambles.

\begin{observation} \label{obs_bramble_tree}
Let $G$ be a connected graph. The maximum order of a bramble of $G$ is $1$ if and only if $G$ is a tree.
\end{observation}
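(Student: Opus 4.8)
The plan is to prove both directions of the equivalence in \autoref{obs_bramble_tree}.

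\textbf{If $G$ is a tree.} Suppose $\mathcal{B}$ is a tight bramble of $G$ of order at least $2$; I want a contradiction. Since every element of $\mathcal{B}$ is connected and has size at least two, and $G$ is a tree, each $B\in\mathcal{B}$ induces a subtree with at least one edge. The key structural fact I would exploit is that in a tree two connected vertex sets $B_1,B_2$ can be tightly touching only in a very restricted way: if $B_1\cap B_2=\emptyset$, then there must be two distinct edges $x_1x_2,y_1y_2$ with $x_1,y_1\in B_1$ and $x_2,y_2\in B_2$; but this would create a cycle in $G$ (the path from $x_1$ to $y_1$ inside the subtree $G[B_1]$, the edge $y_1y_2$, the path from $y_2$ to $x_2$ inside $G[B_2]$, and the edge $x_2x_1$), contradicting that $G$ is a tree. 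Hence any two members of a tight bramble of a tree must intersect, i.e. $\mathcal{B}$ has the Helly-type pairwise-intersecting property. Since subtrees of a tree satisfy the Helly property, all members of $\mathcal{B}$ share a common vertex $v$, and then $\{v\}$ is a cover of size $1$, so the order of $\mathcal{B}$ is $1$ (it cannot be $0$ since $\mathcal{B}\neq\emptyset$ requires a nonempty cover). This shows the maximum order is at most $1$; since $G$ being a tree and connected has at least one edge $xy$, the family $\{\{x,y\}\}$ is a tight bramble of order $1$, so the maximum order is exactly $1$.

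\textbf{If $G$ is not a tree.} Since $G$ is connected and not a tree, it contains a cycle $C$ of length $\ell\geq 3$. I would exhibit a tight bramble of order $2$ directly, generalizing the left-hand example of \autoref{fig_tight_bramble}. Writing the cycle as $v_1v_2\cdots v_\ell v_1$, consider the family whose members are the edge-sets of $C$ viewed as vertex pairs together with one ``long'' set: concretely, take $\mathcal{B}=\{\{v_1,v_2\},\{v_2,v_3\},\{v_3,v_4,\dots,v_\ell,v_1\}\}$ (with the obvious degenerate handling when $\ell=3$, where the third set is $\{v_3,v_1\}$, so $\mathcal{B}$ is all three edges of the triangle, of order $2$). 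Each member is connected in $G$ (they are paths along $C$) and has size $\geq 2$. Pairwise touching: $\{v_1,v_2\}$ and $\{v_2,v_3\}$ share $v_2$; $\{v_2,v_3\}$ and $\{v_3,\dots,v_1\}$ share $v_3$; and $\{v_1,v_2\}$ and $\{v_3,\dots,v_1\}$ share $v_1$. So $\mathcal{B}$ is a tight bramble. Its order is at least $2$: no single vertex $u$ can meet all three sets, since $\{v_1,v_2\}$ and $\{v_2,v_3\}$ force $u=v_2$, but $v_2\notin\{v_3,\dots,v_1\}$ (using $\ell\geq 3$). Hence the maximum order of a tight bramble of $G$ is at least $2$, so in particular not equal to $1$.

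\textbf{Main obstacle.} The only delicate point is the first direction: making rigorous that tightly touching connected sets in a tree must actually intersect, and then invoking the Helly property for subtrees cleanly. The cycle-creation argument needs a small amount of care about whether $x_1x_2$ and $y_1y_2$ share a vertex (the definition of tightly touching explicitly allows sharing one vertex but not two); even if they share, say, $x_2=y_2$, the path from $x_1$ to $y_1$ in $G[B_1]$ together with the two edges still forms a closed walk containing a cycle unless that path is a single edge equal to $x_1y_1$, but then $x_1x_2,y_1y_2,x_1y_1$ are three edges on three vertices, again a triangle — impossible in a tree. Handling these degenerate sub-cases explicitly is the bulk of the work; once past it, the Helly step and the ``not a tree'' construction are routine.
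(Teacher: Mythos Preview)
Your proof is correct and follows essentially the same approach as the paper: in both directions you use the same constructions (Helly for subtrees in the tree case, and the three-piece bramble built from two consecutive cycle edges plus the rest of the cycle in the non-tree case). You are actually more careful than the paper in the forward direction, since you explicitly justify why tightly touching connected sets in a tree must intersect (the paper simply asserts this), including the degenerate cases where the two crossing edges share an endpoint.
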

\begin{proof}
We observe that if $G$ is a tree, a tight bramble consists of set of pairwise intersecting subtrees of $G$. It follows by he Helly property that the maximum order of a bramble of a tree is $1$. Suppose that $G$ is not a tree. Then $G$ contains a cycle $C$ of at least $3$ edges. Let $f_1,f_2,\dots f_t$, $t\geq 3$, be the edges the edges of $C$ such that for $1\leq i< t$, $f_i\cap f_{i+1}\neq\emptyset$ and $f_1\cap f_t\neq\emptyset$. Then observe that $\mathcal{B}=\big\{\{f_1\},\{f_2\},C\setminus\{f_1,f_2\}\big\}$ is a tight bramble of order $2$ (see \autoref{fig_tight_bramble}).
\end{proof}

Finally, if we discard edge-less graphs, then we can observe that the size-two constraint on the elements of a tight bramble can be relaxed without changing the value of the maximum order of a tight bramble of a given graph. 

\begin{observation} \label{obs_edgeless}
Let $\mathcal{B}\subset 2^{V(G)}$ be a set of pairwise touching connected subsets of $V(G)$ containing the singleton set $\{x\}$, for some $x\in V(G)$. If the order of $\mathcal{B}$ is $k$, then there exists a tight bramble $\mathcal{B}'$ of $G$ of order $k$.
\end{observation}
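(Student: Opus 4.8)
I would prove the following slightly more general statement, of which the observation is the special case in which $\mathcal{B}$ contains a singleton: \emph{for every nonempty family $\mathcal{B}$ of pairwise touching connected subsets of $V(G)$ whose order is $k$, there is a tight bramble of $G$ of order $k$}. The argument is an induction on the number $s$ of singleton members of $\mathcal{B}$. If $s=0$ then every member of $\mathcal{B}$ is connected, hence nonempty, and not a singleton, hence of size at least two, so $\mathcal{B}$ is already a tight bramble of order $k$. So assume $s\geq 1$ and fix a singleton $\{x\}\in\mathcal{B}$; I distinguish two cases according to whether $x$ is isolated in $G$.

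If $x$ is isolated in $G$, then for every $B\in\mathcal{B}$ the only way $\{x\}$ can touch $B$ is $x\in B$, since there are no edges incident with $x$. Hence $x$ lies in every member of $\mathcal{B}$, so $\{x\}$ is a cover and $k=1$. Since $G$ is not edgeless, pick any edge $uv\in E(G)$; then $\{\{u,v\}\}$ is a tight bramble of order $1=k$, and we are done.

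If $x$ has at least one neighbour, let $N(x)$ be the set of neighbours of $x$ in $G$ and put $\mathcal{B}'=(\mathcal{B}\setminus\{\{x\}\})\cup\{\{x,y\}\mid y\in N(x)\}$. Each set $\{x,y\}$ is connected, the new sets pairwise intersect in $x$, and since $\{x\}\subseteq\{x,y\}$ and the touching relation is monotone under enlarging a set, each $\{x,y\}$ touches every old member; thus $\mathcal{B}'$ is again a nonempty family of pairwise touching connected sets, and it has exactly $s-1$ singletons. The point to verify is that the order of $\mathcal{B}'$ is still $k$. One inequality is immediate: a minimum cover $S$ of $\mathcal{B}$ meets $\{x\}$, hence contains $x$, hence meets every $\{x,y\}$, so $S$ covers $\mathcal{B}'$ and the order of $\mathcal{B}'$ is at most $k$. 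For the reverse, suppose $S'$ covers $\mathcal{B}'$ with $|S'|\leq k-1$. If $x\in S'$, then $S'$ meets $\{x\}$ and covers $\mathcal{B}\setminus\{\{x\}\}\subseteq\mathcal{B}'$, hence covers $\mathcal{B}$, contradicting that the order of $\mathcal{B}$ is $k$. If $x\notin S'$, then covering every set $\{x,y\}$ forces $N(x)\subseteq S'$; picking any $v\in N(x)$ and setting $T=(S'\setminus\{v\})\cup\{x\}$, so that $|T|\leq|S'|\leq k-1$, one checks that $T$ still covers $\mathcal{B}$. The only risk is a member $B$ with $S'\cap B=\{v\}$ and $x\notin B$; but then $\{x\}$ touching $B$ with $x\notin B$ yields two distinct edges at $x$ reaching into $B$, i.e.\ two distinct neighbours of $x$ lying in $B$, and both of these lie in $N(x)\subseteq S'$ and in $B$, hence both lie in $S'\cap B=\{v\}$, which is impossible. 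This again contradicts that the order of $\mathcal{B}$ is $k$, so the order of $\mathcal{B}'$ equals $k$. Since $\mathcal{B}'$ has fewer singletons than $\mathcal{B}$, the induction hypothesis applied to $\mathcal{B}'$ yields a tight bramble of order $k$.

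The one genuinely delicate point is the lower bound, namely that the order of $\mathcal{B}'$ is at least $k$. Replacing $\{x\}$ by a \emph{single} edge $\{x,y\}$ does not work, since the order can drop --- as already happens for the triangle $xy_1y_2$ with $\mathcal{B}=\{\{x\},\{y_1,y_2\}\}$, where $\{\{x,y_1\},\{y_1,y_2\}\}$ has order $1$ --- and it is precisely by inserting \emph{all} edges incident with $x$ that one can guarantee $N(x)\subseteq S'$ and run the swap argument above. Everything else is routine: the monotonicity of the touching relation under enlarging a set, the connectedness of $\{x,y\}$, the bookkeeping of the singleton count, and the isolated-vertex case. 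I would also check the harmless degenerate situations, such as $\mathcal{B}=\{\{x\}\}$ and possible coincidences among the sets $\{x,y\}$ and the members of $\mathcal{B}\setminus\{\{x\}\}$.
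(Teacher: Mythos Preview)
Your argument is correct and rests on the same core idea as the paper: replace the singleton $\{x\}$ by size-two sets $\{x,y\}$ with $xy\in E(G)$, and show the order is unchanged via a swap that trades a neighbour of $x$ in a putative small cover for $x$ itself. The constructions differ in a small but pleasant way. The paper, for each $S\in\mathcal{B}\setminus\{\{x\}\}$, selects two specific neighbours $y_S,z_S\in S$ of $x$ (the witnesses for $\{x\}$ touching $S$) and inserts only $\{x,y_S\}$ and $\{x,z_S\}$; you instead insert $\{x,y\}$ for \emph{every} $y\in N(x)$. Your choice makes the lower-bound step especially transparent, since any cover of $\mathcal{B}'$ that avoids $x$ is forced to swallow all of $N(x)$, after which the two-distinct-neighbours-in-$B$ observation finishes the job cleanly. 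You also supply two things the paper's proof leaves implicit or does not address: an explicit treatment of the case where $x$ is isolated (the paper's construction tacitly needs two neighbours of $x$ in each $S$), and an induction on the number of singletons, so that the conclusion applies directly to families with several singleton members rather than just one.
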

\begin{proof}
To construct $\mathcal{B}'$, we proceed as follows. For every $S\in\mathcal{B}$ such that $S\neq\{x\}$, let $y_S$ and $z_S$ be two vertices of $S$ such that $xy_S\in E$ and $xz_S\in S$. Then we set $\mathcal{B}'=\big(\mathcal{B}\setminus\{x\}\big)\bigcup \big\{\{xy_S\},\{xz_S\}\mid S\in\mathcal{B},S\neq\{x\}\big\}$. In other words, $\mathcal{B}'$ is obtained by replacing $\{x\}$ by every pair of edges $\{xy_S\}$ and $\{xz_S\}$. Clearly, the order of $\mathcal{B}'$ is at most $k$. To see this, consider a cover $X$ of $\mathcal{B}$. Observe that $x\in X$ and thereby $X$ is a cover of $\mathcal{B}'$ as well. Now assume $\mathcal{B}'$ has a cover $X'$ not containing $X$. Let us consider $S\in \mathcal{B}\cap\mathcal{B}'$. Then $S$ contains two vertices of $S$, namely $y_S$ and $z_S$. It is easy to see that $X'\setminus \{y_S\}\cup\{x\}$ is also a cover of $\mathcal{B}'$. It follows that $\mathcal{B}'$ has a cover of minimum order that contains $x$ and that is thereby also a cover of $\mathcal{B}$.
\end{proof}

%
%

\begin{lemma}\label{bramble_separation}
Let $\mathcal{B}$ be a tight bramble of a graph $G$ and let $S_1$ and $S_2$ be two covers of $\mathcal{B}$. If $S\subsetneq V(G)$ separates $S_1$ from $S_2$ in $G$, then $S$ is a cover of $\mathcal{B}$.
\end{lemma}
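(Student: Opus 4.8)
The plan is to argue by contradiction: suppose $S$ separates $S_1$ from $S_2$ but $S$ is not a cover of $\mathcal{B}$, so there exists some $B\in\mathcal{B}$ with $B\cap S=\emptyset$. Since $B$ is connected in $G$ and avoids $S$, it lies entirely within a single connected component $C$ of $G-S$. Because $S$ separates $S_1$ from $S_2$, the sets $S_1\setminus S$ and $S_2\setminus S$ lie in \emph{distinct} components of $G-S$; hence $B$ can meet at most one of these two ``sides''. Say without loss of generality $C$ does not contain any vertex of $S_2\setminus S$, so $B\cap(S_2\setminus S)=\emptyset$, and combined with $B\cap S=\emptyset$ this gives $B\cap S_2=\emptyset$. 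But $S_2$ is a cover of $\mathcal{B}$, so $B\cap S_2\neq\emptyset$, a contradiction. Therefore $S$ is a cover of $\mathcal{B}$.

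The only point needing a little care is the claim that $B$, being connected in $G$ and disjoint from $S$, is contained in one component of $G-S$: this is immediate because $G[B]$ is a connected subgraph of $G-S$ (all its vertices and edges survive the deletion of $S$), and a connected subgraph always lies inside a single component. A second small point is that $S_1\setminus S$ and $S_2\setminus S$ genuinely lie in different components of $G-S$ — this is exactly what it means for $S$ to separate $S_1$ from $S_2$ (using the definition of ``separates'' from \autoref{sec_basic} and noting $S\subsetneq V(G)$ so $G-S$ is nonempty, and the covers $S_1,S_2$ are nonempty since every tight bramble, being over a non-edge-less graph, is nonempty by the discussion after \autoref{def_tight_bramble}).

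I do not expect any real obstacle here; the statement is essentially the standard ``a separator of two covers is itself a cover'' fact familiar from the theory of ordinary brambles, and the ``tight'' refinement plays no role in this particular argument — only connectedness of bramble elements and the covering property are used. If one wanted to avoid the edge case where a cover is empty, one could equivalently phrase the argument so that if $S_1=\emptyset$ then $\mathcal{B}=\emptyset$ and the claim is vacuous; otherwise pick any $B\in\mathcal{B}$ as above. The whole proof should be three or four sentences.
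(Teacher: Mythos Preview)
Your proposal is correct and takes essentially the same approach as the paper: both arguments use only that each $B\in\mathcal{B}$ is connected and is hit by both covers, and the tightness condition plays no role. The paper phrases it directly (pick $B$, find a path in $B$ between a vertex of $S_1\cap B$ and one of $S_2\cap B$, which $S$ must intersect), while you phrase the same idea by contradiction via components of $G-S$; these are two wordings of the same three-line argument.
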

\begin{proof}
Let $\mathcal{B}$ be a tight bramble of a graph $G$ and $S_1,S_2$ two covers of $\mathcal{B}$. Let $S \subsetneq V(G)$ be a separator of $S_1$ and $S_2$. Consider a set $B \in \mathcal{B}$ of the tight bramble. Since $S_1$ and $S_2$ cover $B$ and since $B$ is connected, there exists a path $P$ whose internal vertices belong to $B$ and whose endpoints belong to $S_1$ and $S_2$. Thus, since $S$ separates $S_1$ and $S_2$, there exists a vertex $x \in P$ such that $x \in S$ and thus, $S$ covers $B$. This holds for every $B \in \mathcal{B}$, thus $S$ is a cover of $\mathcal{B}$. 
\end{proof}

\subsection{Mixed search games}
\label{sec_mixed_search_games}

In this paper, we deal with mixed  search games on a graph $G$ introduced by Bienstock and Seymour~\cite{BienstockS91Monot}.
The opponents are a team of searchers and a fugitive. We
assume that both players have full knowledge of the graph and  the position of their opponent in the graph. Moreover, the fugitive is {\em agile} in the sense that it may move at any moment though unguarded pathways.
At each round, the searchers occupy a subset of vertices, called \emph{searchers' position}, while the fugitive is located on an edge, called \emph{fugitive location}. 
A \emph{play}  is a (finite or  infinite) sequence alternating between searchers' positions and fugitive locations, that is a sequence 
${\cal P}=\langle S_0, e_1, S_1, \dots, e_{\ell}, S_{\ell},\dots \rangle$
with $S_0=\emptyset$ and where  for $i\geq 1$, $S_i\subseteq V(G)$ is a searchers' position and $e_i\in E(G)$ is a fugitive location. 
In case a play  is finite,  it always ends with the special symbol $\star$, replacing a fugitive location and indicating the fact that the fugitive has been captured. 
The {\em cost} of a play  ${\cal P}$ is the maximum size of a searchers' position $S_{i}$ in it and is denoted by $\cost({\cal P})$.

\paragraph{The search strategy.} 
A {\em searchers' move} is a pair $(S,S')\in \big(2^{V(G)}\big)^2$ indicating a transition from the searchers' position $S$ to searcher's position $S'$.  
%
In a mixed search game, a  searchers' move is {\em legitimate} if $|S\ominus S'|\in\{1,2\}$ and, moreover,  
if $|S\ominus S'|=2$ then $S\ominus S'$ is an edge of $G$. This {allows} three types of moves:
\begin{itemize}
\item {[{\sl Placement of a searcher}]:} $S'=S\cup\{x\}$ for some $x\in V(G)\setminus S$. This type of move consists in placing a new searcher on vertex $x$.
\item {[{\sl Removal of a searcher}]:} $S=S'\setminus \{x\}$ for some $x\in S$. This type of move consists in  removing a searcher from vertex $x$.
\item {[{\sl Sliding of a searcher along an edge}]:} $S\ominus S'=\{x,y\}$ and $xy\in E(G)$.  This type of move corresponds to sliding the searcher initially positioned at vertex $x$, along the edge {$xy$}, towards eventually occupying vertex $y$.
\end{itemize}
As the searchers know the exact {fugitive location}, they can use this information to determine their next positions. Therefore, we may define a  {\em (mixed) search strategy} on $G$ as a function ${\bf s}_{G}\in\big(2^{V(G)}\big)^{\big(2^{V(G)}\times E(G)\big)}$
such that, for every $(S,e)\in 2^{V(G)}\times E(G)$, the pair $\big(S,{\bf s}_{G}(S,e)\big)$ is a legitimate searchers' move. We denote by ${\cal S}_{G}$  the set of all search strategies on $G$.
Every legitimate searchers' move $(S,S')$ 
immediately clears a set of edges defined as follows:
$$\clear_G(S,S'):=
\begin{cases}
\big\{xy\mid x\in S\big\}\cap E(G), & \mbox{if }S'\setminus S=\{y\}\\
\emptyset & \mbox{if }S'\setminus S=\emptyset.\end{cases}
$$

\paragraph{The fugitive strategy.}

To formally define  the fugitive strategy we need to introduce some concepts. Let us consider a legitimate  searchers' move $(S,S')$. 

A {\em pathway}  $\W=\langle f_1,f_2,\dots f_t\rangle$ of $G$, with $t\geq 2$, is \emph{$(S,S')$-avoiding} if the following conditions are satisfied:
\begin{enumerate}
\item for every $j\in[2,t]$, $f_{j-1}\cap f_j\notin S\cap S'$, and

\item if $S'\setminus\{y\}= S\setminus\{x\}$  (that is, $(S,S')$ is a sliding move from $x$ to $y$ along the edge $xy$, see \autoref{fig_avoiding_pathway}), and $xy\in \mathcal{W}$, then 
 \begin{itemize}
 \item $xy=f_1$ or $xy=f_t$;
 \item if $xy=f_1=f_t$, then $t>2$.
 \end{itemize}
\end{enumerate}

%
%
%
%


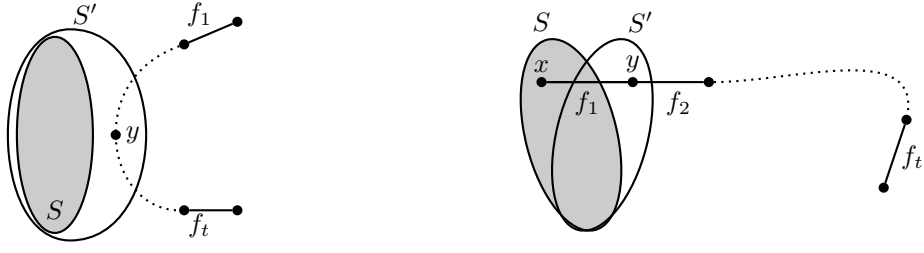
\begin{figure}[ht]
\begin{center}
\begin{tikzpicture}[thick,scale=1]
\tikzstyle{sommet}=[circle, draw, fill=black, inner sep=0pt, minimum width=3pt]

\begin{scope}
\draw[rotate=90,fill=gray!40] (0,0.2) circle (1.3 and 0.5);
\node[] at (-0.2,-1) {$S$};

\coordinate[] (a) at (90:1.4);
\coordinate[] (b) at (-90:1.4);
\coordinate[] (c) at (0:1);
\draw[-,>=latex] (a) edge[out=0,in=90] (c) ;
\draw[-,>=latex] (c) edge[out=-90,in=0] (b) ;
\draw[-,>=latex] (b) edge[out=180,in=180] (a) ;
\node[] at (0.2,1.6) {$S'$};

\coordinate[] (x) at (0.6,0);
\coordinate[] (v) at (1.5,1);
\coordinate[] (w) at (1.5,-1);

\draw node[sommet] (u) at (0.6,0){};
\node[right] at (u) {$y$};

\draw node[sommet] (v) at (1.5,1.2){};
\draw node[sommet] (u) at (2.2,1.5){};
\draw node[sommet] (w) at (1.5,-1){};
\draw node[sommet] (z) at (2.2,-1){};

\node[] at (1.7,1.6) {$f_1$};
\node[] at (1.7,-1.2) {$f_t$};

\draw (u) -- (v);
\draw[dotted,>=latex] (v) edge[out=200,in=90] (x) ;
\draw[dotted,>=latex] (x) edge[out=-90,in=180] (w) ;
\draw (w) -- (z) ;

\end{scope}

\begin{scope}[shift=(0:7)]
\draw[rotate=104,fill=gray!40] (0.1,0.4) circle (1.3 and 0.6);
\draw[rotate=90-14] (0,0) circle (1.3 and 0.6);

\draw node[sommet] (u) at (-0.8,0.7){};
\node[above] at (u) {$x$};
\draw node[sommet] (v) at (0.4,0.7){};
\node[above] at (v) {$y$};
\draw (u) -- (v);
\node[below] at (-0.2,0.7) {$f_1$};

\draw node[sommet] (a) at (1.4,0.7){};
\draw node[sommet] (b) at (4,0.2){};
\draw node[sommet] (c) at (3.7,-0.7){};

\draw (u) -- (a);
\node[below] at (1,0.7) {$f_2$};
\draw (b) -- (c);
\node[right] at (3.8,-0.3) {$f_t$};

\draw[dotted,>=latex] (a) edge[out=0,in=80] (b) ;

\node[] at (-0.8,1.5) {$S$};
\node[] at (0.5,1.5) {$S'$};

\end{scope}

\end{tikzpicture}
\end{center}
\vspace{-3mm}
\caption{\label{fig_avoiding_pathway} Two legitimate moves $(S,S')$ and $(S,S')$-avoiding pathway $\mathcal{W}=\langle f_1,f_2,\dots f_t\rangle$ are depicted. On the left, we have $S'=S\cup\{x\}$, and $\mathcal{W}$ may go through the vertex $x$. On the right, $S\ominus S'=\{x,y\}$, i.e., a searcher slides along the edge {$f_1=xy$} from $x$ to $y$. Observe that, if $t>2$, then the edge $f_t$ may be the edge $f_1$.
}
\end{figure}
Given a legitimate  searchers' move $(S,S')$ and an edge $e$, the subset of edges of $E(G)$ that are \emph{accessible} from $e$ through a $(S,S')$-avoiding pathway of $G$ is defined as
$$A_G(S,e,S'):=\left\{e'\in E(G)\setminus \mbox{${S' \choose 2}$} \mid  \mbox{there is {an} $(S,S')$-avoiding $(e,e')$-pathway}\right\}.$$
We can now define the {\em fugitive space} so as to contain  the edges  where the fugitive   initially residing at $e$ may move after the {searchers'} move $(S,S')$, that is 
$$\rspace_{G}(S,e,S'):= \left(\{e\}\setminus \clear_G(S,S')\right) \cup A_G(S,e,S').$$
%
Given a graph $G$, we set $E^{\star}(G)=E(G)\cup\{\star\}$
and we define a \emph{fugitive strategy} on $G$ as a  pair $(e_1,\f_{G})\in E(G)\times \big(E^{\star}(G)\big)^{\big(2^{V(G)}\times E(G)\times 2^{V(G)}\big)}$ such that, for the function $\f_{G}$, if $\rspace_{G}(S,e,S')\neq\emptyset$, then $\f_G(S,e,S')\in \rspace_{G}(S,e,S')$,  otherwise $\f_G(S,e,S')=\star$.  
We denote by  ${\cal F}_{G}$  the set if all fugitive strategies on $G$.

%
%

\paragraph{Winning and monotone search strategies}
 
A search program (on a graph $G$) is a pair $\big({\bf s}_{G},(e_{1},{\bf f}_{G})\big)\in{\cal S}_{G}\times {\cal F}_{G}$. 
{Every search} program $\big({\bf s}_{G},(e_{1},{\bf f}_{G})\big)$ {\em generates} {a play, denoted}
\begin{eqnarray}
\play({\bf s}_{G},e_{1},{\bf f}_{G}) & := & \langle S_0, e_1, S_1, \dots, e_{\ell}, S_{\ell},\dots \rangle
\label{kiolp}
\end{eqnarray}
where for each $i\geq 1$, $S_{i}={\bf s}_{G}(S_{i-1},e_i)$ and $e_{i+1}={\bf f}_{G}(S_{i-1},e_{i},S_{i})$.
%
%
%
The {search} program $(\s_G,e_{1},\f_G)$ is \emph{monotone} if, in \eqref{kiolp}, for every $i\geq 1$, the edge $e_i$ has not been cleared at {any step prior to $i$}, that is for every $j\leq i$, $e_i\notin \clear_G(S_{j-1},S_j)$.
A search strategy $\s_G\in\mathcal{S}_G$ is \emph{monotone} if for every fugitive strategy $(e_1,\f_G)\in \mathcal{F}_G$, the program $\big(\s_G,(e_{1},\f_G)\big)$ is monotone.

Let ${\bf s}_{G}$ be a search strategy. 
The \emph{cost}, denoted by $\cost(\s_G),$ of  $\s_G$ is the maximum cost of $\play({\bf s}_{G},e_{1},{\bf f}_{G})$, over all $(e_{1},{\bf f}_{G})\in{\cal F}_{G}$.
%
Also, $\s_G$ is \emph{winning} if for every fugitive strategy $(e_1,\f_G)\in \mathcal{F}_G$, the play $\play({\bf s}_{G},e_{1},{\bf f}_{G})$  is finite.

%
%
%

We define the \emph{mixed search number (against an agile and visible fugitive)} of $G$ as:
$$\avms{}(G)= \min \big\{\cost(\s_G)\mid \s_G \mbox{~is a winning search strategy in~} \mathcal{S}_G  \big\}. $$
And if we restrict the search strategies to the monotone ones, we define:
$$\mavms{}(G)=  \min \big\{\cost(\s_G)\mid \s_G \mbox{~is a monotone winning search strategy in~} \mathcal{S}_G  \big\}. $$


\section{Monotonicity of the mixed search game against an agile and visible fugitive}

This section is devoted to the main result of this paper, that is \autoref{th_min_max} {below.} It provides a min-max characterization of graphs with Cartesian tree product number at most $k$ in terms of tight bramble obstacles. Moreover, it shows that the Cartesian tree product number corresponds to the number of searchers required to capture an agile and visible fugitive in a mixed search game. As a byproduct, it proves that the mixed search game against an agile and visible fugitive is monotone.
This results can be seen as the ``visible counterpart'' of the result of Bienstock and Seymour \cite{BienstockS91Monot} where the monotonicity is proved for the case of an agile and invisible fugitive. Also, it can be seen as the ``mixed counterpart'' of the result of   Seymour and Thomas \cite{SeymourT93Graph} where  the monotonicity is proved for the {\sl node} search game against  an agile and visible fugitive.

%

\begin{theorem} \label{th_min_max}
Let $G$ be a graph and $k\geq 1$ be an integer. 
Then the following conditions are equivalent:
\begin{enumerate}
\item $G$ has a loose tree-decomposition of width $k$;
\item $G$ is a minor of $T^{(k)}=T\Box K_k$ (\ie~$\ctp(G)\leq k$);
\item every tight bramble $\mathcal{B}$ of $G$ has order at most $k$;
\item the mixed search number against an agile and visible fugitive is at most $k$ (\ie{} $\avms(G)\leq k$);
\item the monotone mixed search number against an agile and visible fugitive is at most $k$ (\ie{} $\mavms(G)\leq k$).
\end{enumerate}
\end{theorem}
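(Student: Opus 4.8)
The plan is to establish the easy equivalence $(1)\Leftrightarrow(2)$ together with the implications $(1)\Rightarrow(5)$, $(5)\Rightarrow(4)$, $(4)\Rightarrow(3)$, and $(3)\Rightarrow(1)$, which closes the cycle. The implication $(5)\Rightarrow(4)$ is immediate, since a monotone winning strategy is in particular winning, so $\avms(G)\le\mavms(G)$. For $(1)\Leftrightarrow(2)$, I would first pass to a \emph{full} loose tree-decomposition $\mathcal{D}=(T,\chi)$ of width $k$ (\autoref{lem_full_ltd}). For $(1)\Rightarrow(2)$: root $T$, colour the root bag with the $k$ colours of $[k]$ bijectively, and propagate the colouring downwards by letting the vertex newly appearing in a child bag inherit the colour of the vertex it replaces (recall that in a full decomposition adjacent bags differ by a single swap). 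This assignment $c\colon V(G)\to[k]$ is well defined, assigns distinct colours within every bag, and then $\rho(v):=\{(t,c(v))\mid t\in T_v\}$ is a minor model of $G$ in $T\Box K_k$; the only nontrivial case is a marginal edge $xx'$ of a tree-edge $\{t,t'\}$, and it is covered precisely because $x$ and $x'$ share a colour, so $(t,c(x))(t',c(x))$ is an edge of $T^{(k)}$ between $\rho(x)$ and $\rho(x')$. For $(2)\Rightarrow(1)$: note first that $T^{(k)}$ itself carries a width-$k$ loose tree-decomposition, obtained by subdividing each tree-edge of $T$ into a path along which the $k$ coordinates of $K_k$ are flipped one at a time (so each resulting tree-edge has exactly one marginal edge), and then check that a width-$k$ loose tree-decomposition is preserved under vertex deletions, edge deletions (both immediate) and edge contractions (replace the contracted pair by the new vertex, subdividing once more any tree-edge that would otherwise acquire two marginal edges).

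\textbf{From loose tree-decompositions to monotone strategies.} For $(1)\Rightarrow(5)$ I would convert a full loose tree-decomposition of width $k$ into a monotone winning strategy of cost $k$ by a depth-first traversal of $T$, keeping the invariant that when the search reaches a node $t$ the searchers occupy exactly $\chi(t)$ and the cleared edges are exactly those lying ``behind'' the current position. The step from a node to a child with swap pair $\{x,x'\}$ is realised by a single slide along $xx'$ when $xx'$ is the marginal edge of that tree-edge, and otherwise by first removing the searcher at $x$ and then placing one at $x'$; \autoref{lem_node_sep} and \autoref{lem_tree_edge_sep} certify that the cleared region is shielded by the current searchers at every step, which is exactly the absence of recontamination. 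The delicate point here is the order in which the children of a node are processed, so that the slides along marginal edges never expose a previously cleared edge.

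\textbf{From tight brambles to escape strategies.} For $(4)\Rightarrow(3)$ I argue contrapositively: a tight bramble $\mathcal{B}$ of order at least $k+1$ yields a fugitive strategy beating every search strategy of cost at most $k$. The fugitive maintains the invariant that after the $i$-th searchers' move $(S_{i-1},S_i)$ its location $e_i$ is an edge of $G[B_i]$ for some $B_i\in\mathcal{B}$ with $B_i\cap S_{i-1}=\emptyset$. If the move does not meet $B_i$, the fugitive stays on $e_i$, which is neither cleared nor doubly occupied because $B_i$ avoids $S_i$; otherwise, since $|S_i|\le k$ is below the order of $\mathcal{B}$, some $B_{i+1}\in\mathcal{B}$ avoids $S_i$, and the fugitive escapes to an edge of $G[B_{i+1}]$ along a pathway using only edges of $G[B_i]$, edges of $G[B_{i+1}]$, and one edge joining the two sets. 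All junction vertices of this pathway lie in $B_i\cup B_{i+1}$ and hence avoid $S_{i-1}\cap S_i$, so the pathway is $(S_{i-1},S_i)$-avoiding; and when $(S_{i-1},S_i)$ is a slide along an edge $xy$, that edge lies in neither $G[B_i]$ nor $G[B_{i+1}]$, and since $B_i$ and $B_{i+1}$ are \emph{tightly} touching there are \emph{two} edges between them, so the fugitive crosses via the one that is not $xy$ — this is exactly where the two-edge condition is needed. The resulting play is infinite, so no strategy of cost at most $k$ is winning, i.e.\ $\avms(G)\ge k+1$.

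\textbf{The main obstacle: $(3)\Rightarrow(1)$.} I expect the technical heart of the argument to be the duality direction: from a graph with no loose tree-decomposition of width $k$, construct a tight bramble of order at least $k+1$. Here I would follow the Seymour–Thomas tree-width/bramble duality, recursively trying to assemble a width-$k$ loose tree-decomposition by splitting $G$ along small separators, but tracking separators of the relaxed form ``$\chi(t_1)\cap\chi(t_2)$ after deleting the two endpoints of a marginal edge'' in the spirit of \autoref{lem_node_sep} and \autoref{lem_tree_edge_sep}. The obstruction at which this recursion gets stuck must then be robust not merely against the removal of a vertex set of size at most $k$, but against the removal of such a set together with one edge; collecting the ``large sides'' of these stuck separations gives a family of connected sets, and one has to verify, using \autoref{bramble_separation} and the extra marginal-edge slack, that these sets are pairwise tightly (two-edge) touching and admit no cover of size at most $k$, yielding a tight bramble in the sense of \autoref{def_tight_bramble}. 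Making this ``vertex set plus one edge'' separator bookkeeping precise, and in particular extracting the \emph{two}-edge touching rather than the weaker single-edge touching, is the step I anticipate to be the most demanding. (An alternative would be to aim for $(3)\Rightarrow(5)$ directly, building a monotone winning strategy of cost $k$ from the absence of a tight bramble of larger order; this is a reformulation of the same duality.)
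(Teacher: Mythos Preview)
Your implication cycle matches the paper's (after identifying $(1)$ and $(2)$), and your arguments for $(1)\Leftrightarrow(2)$ and $(4)\Rightarrow(3)$ are essentially those of \autoref{th_ctp_width} and \autoref{th_escape}. Two points, however, need correction or sharpening.

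\textbf{On $(1)\Rightarrow(5)$.} The ``depth-first traversal of $T$'' framing, together with your concern about the order in which children are processed, is the paradigm for an \emph{invisible} fugitive and does not give a winning strategy here with only $k$ searchers: once the searchers descend into a subtree not containing the fugitive, the root bag is no longer fully occupied and the fugitive can cross into already-visited territory. The paper's \autoref{th_avms_la} uses instead the natural \emph{adaptive} strategy that exploits visibility: when the searchers occupy $\chi(t)$, they move to the unique neighbour $t'$ of $t$ lying on the $T$-path toward (a tree-edge witnessing) the fugitive's current edge, using a slide if $\chi(t)\ominus\chi(t')$ is a marginal edge and a remove/add pair otherwise. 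There is then no child-ordering issue at all; \autoref{lem_node_sep} and \autoref{lem_tree_edge_sep} show the fugitive space strictly shrinks, which is exactly monotonicity. Your local description of the slide versus remove/add step is right; only the global traversal scheme needs to be replaced by this chase.

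\textbf{On $(3)\Rightarrow(1)$.} You are right that this is the technical heart, but your sketch does not yet contain the device the paper uses. Rather than a Seymour--Thomas style construction of a bramble from a failed decomposition, the paper follows Bellenbaum--Diestel: it proves, by \emph{downward} induction on $|\mathcal{B}|$, that for every tight bramble $\mathcal{B}$ there is a loose tree-decomposition in which every bag of size $>k$ fails to cover $\mathcal{B}$; taking $\mathcal{B}=\emptyset$ gives width $\le k$. The step takes a minimum cover $X$ of $\mathcal{B}$ and treats each component $C$ of $G-X$: either $C$ fails to tightly touch some $B\in\mathcal{B}$, in which case $C\cup N(C)$ minus one vertex of $B$ is attached as a single leaf bag with the unique $B$--$C$ edge as its marginal edge (this is precisely where the two-edge touching, rather than one-edge touching, is cashed in); or $\mathcal{B}\cup\{C\}$ is a strictly larger tight bramble, and the decomposition obtained by induction is rerouted to be $X$-rooted without growing any bag, via Menger paths between $X$ and an oversized bag (using \autoref{bramble_separation}). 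Your ``vertex set plus one edge'' intuition points at the first alternative, but the mechanism that actually produces the tight touching, and the induction scheme that organises the argument, are not yet in your outline.
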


The proof of \autoref{th_min_max} is as follows. First, \autoref{th_ctp_width} proves that $(1\Leftrightarrow 2)$. Then \autoref{th_bramble} establishes $(3\Rightarrow 2)$. In turn  \autoref{th_avms_la} shows $(2\Rightarrow 5)$. The fact that $(5\Rightarrow 4)$ is trivial. Finally, we can conclude, using \autoref{th_escape}, that $(4\Rightarrow 3)$.

%

\subsection{The Cartesian tree product number and loose-tree decomposition}

\autoref{th_ctp_width} establishes that the minimum width of a loose tree-decomposition is equal to the  \emph{Cartesian tree product number} of a graph ($(1\Leftrightarrow 2)$ from~\autoref{th_min_max}). 
We first show that every minor of $T^{(k)}=T\Box K_k$ admits a loose tree-decomposition of width $k$. To that aim, we prove that the width of a loose tree-decomposition of a graph is a parameter that is closed under the minor relation (\autoref{lem_ctw_minor_close}). Then, we build for $T^{(k)}$ a loose tree-decomposition of width at most $k$. To prove the reverse direction, we show that given a loose tree-decomposition $\mathcal{D}$ of width $k$ of $G$, one can complete the graph $G$ into a graph $H$ which is contained in $T^{(k)}$ as a minor. The graph $H$ is obtained by adding all possible missing edge between pair of vertices of the same bag of $\mathcal{D}$ and all possible missing marginal edges of $\mathcal{D}$.

\begin{lemma}\label{lem_ctw_minor_close}
Let $G$ and $H$ be two graphs such that $H\preceq G$. If $G$ has a loose tree-decomposition of width $k$, then also does $H$.
\end{lemma}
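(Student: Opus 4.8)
The statement is a minor-monotonicity result for the "loose tree-decomposition width" parameter. It suffices to check closure under the three elementary minor operations: edge deletion, vertex deletion, and edge contraction. So the proof will take a loose tree-decomposition $\mathcal{D}=(T,\chi)$ of $G$ of width $k$ and, for each operation producing a graph $H$, produce a loose tree-decomposition of $H$ of width at most $k$ (width cannot increase since the bags only shrink or stay the same).

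\medskip

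\noindent\textbf{Edge deletion.} Let $H=G-xy$. I would simply take $\mathcal{D}'=(T,\chi)$ unchanged. Condition \A{} is unaffected since the vertex set does not change. Condition \B{} for edges of $H$ still holds, as every edge of $H$ is an edge of $G$ and is witnessed by the same tree-edge. Condition \C{} can only get easier: for each tree-edge $f=\{t_1,t_2\}$ the set $E(G[\chi(t_1)\cup\chi(t_2)])\setminus(E(G[\chi(t_1)])\cup E(G[\chi(t_2)]))$ either stays the same or loses the element $xy$, so its size is still at most $1$. Hence $\mathcal{D}'$ is a loose tree-decomposition of $H$ of the same width.

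\medskip

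\noindent\textbf{Vertex deletion.} Let $H=G-v$. Take $\mathcal{D}'=(T,\chi')$ with $\chi'(t)=\chi(t)\setminus\{v\}$ for every node $t$; this may create empty bags, but that is harmless. Condition \A{} for each $x\neq v$ is inherited since its trace is unchanged. Condition \B{}: every edge of $H$ avoids $v$, so a tree-edge witnessing it in $\mathcal{D}$ still witnesses it after deleting $v$ from both bags. Condition \C{}: for a tree-edge $f=\{t_1,t_2\}$, the "marginal" set for $\chi'$ is contained in the corresponding set for $\chi$ (removing $v$ from both bags can only delete edges incident to $v$ from all three edge-sets in the expression), so it still has size at most $1$. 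Again the width does not increase.

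\medskip

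\noindent\textbf{Edge contraction (the main obstacle).} This is the only delicate case. Let $H=G_{/e}$ with $e=xy$ contracted to a new vertex $v_e$. Define $\chi'(t)$ by replacing $x$ and/or $y$ in $\chi(t)$ by $v_e$: that is, $\chi'(t)=\bigl(\chi(t)\setminus\{x,y\}\bigr)\cup\{v_e\}$ if $\chi(t)\cap\{x,y\}\neq\emptyset$, and $\chi'(t)=\chi(t)$ otherwise. Width clearly does not increase (merging two vertices into one can only shrink a bag). Condition \A{} for $v_e$: its trace $T_{v_e}$ is $T_x\cup T_y$; by \B{} applied to the edge $xy$ there is a tree-edge with $x,y$ in the union of its two bags, and since the only way $xy$ can appear there is either as a non-marginal edge inside one bag (so $T_x\cap T_y\neq\emptyset$) or as the marginal edge of that tree-edge (so $T_x$ and $T_y$ are on opposite sides but the two endpoints of the tree-edge lie in $T_x$ and $T_y$ respectively), in both cases $T_x\cup T_y$ is connected; traces of other vertices are unchanged. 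Condition \B{}: an edge $v_e w$ of $H$ comes from an edge $xw$ or $yw$ of $G$, witnessed by some tree-edge, which still witnesses it after substitution; edges of $H$ not incident to $v_e$ are inherited directly. Condition \C{} is the crux: I must check that for every tree-edge $f=\{t_1,t_2\}$, the set $M_f:=E(H[\chi'(t_1)\cup\chi'(t_2)])\setminus(E(H[\chi'(t_1)])\cup E(H[\chi'(t_2)]))$ has size at most $1$. The natural argument is that the contraction map induces a surjection from the old marginal set onto $M_f$ — more precisely, every edge counted in $M_f$ is the image of an edge of $G$ that was counted in the old marginal set for $f$ — so $|M_f|\le 1$. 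The subtlety to handle carefully is the possibility that contraction merges two previously-distinct edges (e.g. $xw$ and $yw$ become the single loop-free edge $v_ew$), or creates a loop (if $e$ itself was... but $e$ is gone), or that an edge which was "inside a bag" in $G$ becomes "marginal" in $H$, or vice versa; in each sub-case one checks the image count does not exceed the source count. I expect this bookkeeping — a short case analysis on how $x$ and $y$ are distributed among $\chi(t_1)$ and $\chi(t_2)$ and whether $xy$ was the marginal edge of $f$ — to be the only real work, and it goes through because contraction is a quotient map that cannot increase the number of "extra" edges across a tree-edge.

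\medskip

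\noindent Finally, since $H\preceq G$ means $H$ is obtained from $G$ by a finite sequence of these three operations, iterating the above yields a loose tree-decomposition of $H$ of width at most $k$.
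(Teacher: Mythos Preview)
Your proposal is correct and follows essentially the same approach as the paper: handle the three minor operations separately, keep $(T,\chi)$ unchanged for deletions, and for contraction replace $x,y$ by $v_e$ in every bag and check \A--\C. The only point you flag but do not fully execute---that every marginal edge of $\mathcal{D}'$ at a tree-edge $f$ is the image of a marginal edge of $\mathcal{D}$ at $f$---is exactly what the paper argues (also rather briefly); the key observation making it work is that since $T_{v_e}=T_x\cup T_y$ is connected and avoids $t_2$, both $T_x$ and $T_y$ lie on the $t_1$-side of $f$, which together with \B\ forces the relevant endpoint of the preimage edge into $\chi(t_1)$.
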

\begin{proof}
Let $\mathcal{D}=(T,\chi)$ be a loose tree-decomposition of $G$. Suppose that $H$ is obtained by removing an edge or a vertex from a graph $G$. It is clear  
from \autoref{def_ltd} that restricting the bags of $\mathcal{D}$ to the vertices of $H$ yields a loose tree-decomposition of $H$  and does not increase the width.
So suppose that $H=G_{/e}$ from some edge $e=xy\in E(G)$. Let $v_e$ be the new vertex resulting from the contraction of $e$. We define $\mathcal{D}'=(T,\chi')$ such that for every $t\in V(T)$, if $x,y\notin \chi(t)$, then $\chi'(t)=\chi(t)$, otherwise $\chi'(t)=\chi(t)\setminus\{x,y\}\cup\{v_e\}$.
We prove that $\mathcal{D}'$ is a loose tree decomposition of $H$. 
\begin{itemize}
\item First observe that by condition \B{}, in $\mathcal{D}$ the trace $T_x$ of $x$ and the trace $T_y$ of $y$ either intersect or are joined by a tree edge. It follows that in $\mathcal{D}'$, the trace $T_{v_e}$ is connected. So condition {\A{}} holds for $\mathcal{D}'$. 
\item By construction of $\mathcal{D}'$, for every node $t\in V(T)$ such that $x\in\chi(t)$ or $y\in\chi(t)$, we have $v_e\in\chi'(t)$. This implies that $\mathcal{D}'$ satisfies condition \B{}. 
\item Suppose that in $\mathcal{D}'$, the edge $uv\in E(H)$ is a marginal edge for the nodes $t_1$ and $t_2$ of $T$. This means that $u\in\chi'(t_1)\setminus\chi'(t_2)$ and $v\in\chi'(t_2)\setminus\chi'(t_1)$. By construction, if $u$ and $v$ both belong to $V(G)$, then $uv$ is also a marginal edge of $G$ for the nodes $t_1$ and $t_2$ in $\mathcal{D}$. So assume,   without loss of generality that {say} $u=v_e$. As $v_e$ results from the contraction of {the} edge $xy\in E(G)$, $G$ contains the edge $xv$ or $yv$. Suppose that $xv\in E$. By construction of $\mathcal{D}'$, we have $x\in\chi(t_1)\setminus\chi(t_2)$. It follows that in $\mathcal{D}$, $xv$ is a marginal edge {between} the nodes $t_1$ and $t_2$ of $T$. So we proved that every marginal edge in $\mathcal{D}'$ corresponds to a marginal edge in $\mathcal{D}$. This implies that if for some tree-edge $t_1t_2$ of $T$, $\mathcal{D}'$ does not satisfies condition \C{}, neither does $\mathcal{D}$. 
\end{itemize}
Finally, we observe that, by construction, ${\sf width}(\mathcal{D},H)\leq {\sf width}(\mathcal{D},G)$, concluding the proof.
\end{proof}

\begin{lemma} \label{th_ctp_width}
For every graph $G$, we have
$$\ctp(G)=\min\big\{\width(\mathcal{D},G)\mid \mathcal{D} \mbox{ is a loose tree-decomposition of } G  \big\}.$$
\end{lemma}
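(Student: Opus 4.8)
The plan is to prove the equality $\ctp(G)=\min\{\width(\mathcal{D},G)\mid \mathcal{D}\text{ loose tree-decomposition of }G\}$ by establishing the two inequalities separately, following the roadmap already sketched in the paragraph preceding the statement.

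\textbf{The direction $\ctp(G)\geq \min\width$.} Suppose $\ctp(G)=k$, so $G\preceq T^{(k)}=T\Box K_k$ for some tree $T$. By \autoref{lem_ctw_minor_close}, it suffices to exhibit a loose tree-decomposition of $T^{(k)}$ of width $k$, since the width of a loose tree-decomposition is closed under taking minors. The natural candidate is ${\cal D}=(T,\chi)$ with $\chi(x)=\{(x,i)\mid i\in[k]\}$, i.e.\ each bag is the ``fibre'' over a node of $T$. I would verify the three conditions of \autoref{def_ltd}: condition \A{} holds because the trace of $(x,i)$ is the single node $x$; condition \B{} holds because every edge of $T^{(k)}$ is either a ``clique edge'' $(x,i)(x,j)$, contained in the bag $\chi(x)$, or a ``tree edge'' $(x,i)(y,i)$ with $xy\in E(T)$, contained in $G[\chi(x)\cup\chi(y)]$; condition \C{} holds because for adjacent nodes $x,y$ of $T$, the edges of $G[\chi(x)\cup\chi(y)]$ that are not inside $\chi(x)$ or inside $\chi(y)$ are exactly the $k$ parallel edges $(x,i)(y,i)$ — wait, that is $k$ edges, not at most one. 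So the straightforward fibre decomposition does \emph{not} work directly; instead I would subdivide each tree-edge $xy$ of $T$ into a path of $k+1$ nodes $x=t_0,t_1,\dots,t_k=y$ and let the bags interpolate between $\chi(x)$ and $\chi(y)$ one coordinate at a time, so $\chi(t_j)=\{(y,i)\mid i\leq j\}\cup\{(x,i)\mid i>j\}$ (identifying $(x,i)$ and $(y,i)$ appropriately). Then between consecutive nodes exactly one marginal edge $(x,j)(y,j)$ appears, all bags still have size $k$, and conditions \A{}--\C{} are satisfied. This gives a loose tree-decomposition of $T^{(k)}$ of width $k$, hence one of $G$ of width at most $k$.

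\textbf{The direction $\ctp(G)\leq \min\width$.} Suppose $G$ has a loose tree-decomposition ${\cal D}=(T,\chi)$ of width $k$. By \autoref{lem_full_ltd} we may assume ${\cal D}$ is full: every bag has size exactly $k$ and adjacent bags differ by exactly one swapped vertex, with at most one marginal edge between them. I would build a graph $H\supseteq G$ on the same vertex set by adding, for every node $t$, all missing edges inside $\chi(t)$ (making each bag a clique), and, for every tree-edge $\{t_1,t_2\}$, the (unique potential) marginal edge if it is not already present. Since $G\subseteq H$, it suffices to show $H\preceq T^{(k)}$ for an appropriate tree $T$; I will in fact take $T$ to be (a suitable subdivision/refinement of) the tree underlying ${\cal D}$. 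The minor model $\rho:V(T^{(k)})\to 2^{V(H)}$ — or rather, it is cleaner to exhibit a minor model of $H$ in $T^{(k)}$ — assigns to each vertex $v$ of $H$ the set of pairs $\{(t,\pi_t(v))\mid t\in T_v\}$, where $T_v$ is the trace of $v$ and $\pi_t$ is an injection $\chi(t)\hookrightarrow[k]$ chosen consistently along the tree so that when $v$ persists across a tree-edge $\{t_1,t_2\}$ its coordinate is unchanged, and the single swapped vertex reuses the coordinate freed up. Fullness guarantees such a consistent coordinate assignment exists (each tree-edge swaps exactly one vertex, so only one coordinate needs relabelling, and it can be reused). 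I would then check: each $\rho(v)$ is connected in $T^{(k)}$ (it is a ``staircase'' along $T_v$ using clique edges at branch nodes and tree edges along the fibres); the $\rho(v)$ are pairwise disjoint (distinct vertices in the same bag get distinct coordinates); and every edge of $H$ is realized — clique edges inside $\chi(t)$ are realized by a clique edge at node $t$, and each marginal edge $\{x_1,x_2\}$ across $\{t_1,t_2\}$ is realized because fullness forces $x_1,x_2$ to occupy the \emph{same} coordinate $i$ (the one being swapped), so $(t_1,i)(t_2,i)$ is the required edge of $T^{(k)}$.

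\textbf{Main obstacle.} The routine parts are the verification of the loose-tree-decomposition axioms and the disjointness of the minor model. The delicate point is the consistent coordinate assignment $\pi_t:\chi(t)\hookrightarrow[k]$ along the tree in the second direction: one must argue that fullness (exactly one vertex swapped per tree-edge, at most one marginal edge) lets us relabel coordinates locally so that each marginal edge ends up ``axis-aligned'' (both endpoints on the same $K_k$-copy index) while keeping all persisting vertices fixed — this is essentially the observation that along any tree-edge only a single coordinate is in play, so there is always a free slot to assign to the incoming vertex, namely the slot vacated by the departing one. Getting the bookkeeping right (and handling the subdivision so that non-full intermediate situations are avoided) is where the care is needed; everything else follows from \autoref{lem_full_ltd}, \autoref{lem_ctw_minor_close}, and the structural lemmas \autoref{lem_node_sep}--\autoref{lem_tree_edge_sep} already at our disposal.
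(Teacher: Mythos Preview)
Your proposal is correct and follows essentially the same route as the paper: in both directions the paper does exactly what you outline (subdividing each edge of $T$ into a path of $k$ edges with interpolating bags for the first inequality, and using \autoref{lem_full_ltd} plus a consistent coordinate assignment to build a minor model of the ``completed'' graph $H$ in $T^{(k)}$ for the second). Two minor remarks: in the second direction no subdivision of $T$ is needed---the full decomposition already guarantees that each persisting vertex keeps the \emph{same} coordinate along its entire trace, so $\rho(v)$ is a constant-coordinate copy of $T_v$ (not a ``staircase''), and the marginal edge across each tree-edge is automatically axis-aligned because the entering vertex reuses the departing vertex's slot.
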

\begin{proof}
We first prove that $\min\big\{\width(\mathcal{D},G)\mid \mathcal{D} \mbox{ is a loose tree-decomposition of } G  \big\}\le \ctp(G)$.  By \autoref{lem_ctw_minor_close}, it is enough to prove the statement for $T^{(k)}$, where $k=\ctp(G)$. Recall that, by construction of $T^{(k)}$, we have that {$V(T^{(k)})=\big\{x_{(t,i)}\mid t\in V(T), i\in [k]\big\}$} and {${E(T^{(k)})}=\big\{x_{(t,i)}x_{(t,j)}\mid t\in V(T), i\neq j\big\}\cup \big\{x_{(t,i)}x_{(t',i)}\mid t,t'\in V(T), t\neq t',i\in[k]\big\}$}. We build a loose tree-decomposition $\mathcal{D}=(T',\chi)$ of $T^{(k)}$ as follows:
\begin{itemize}
\item For the sake of the construction of $T'$, we fix an arbitrary node $r$ of $T$ as the root of $T$, implying that for every node $t\in V(T)$ distinct from $r$, there is a uniquely defined parent node $p(t)$. 
The tree $T'$ is obtained from $T$ by every edge $k-1$ times.
More formally, we define (see \autoref{fig_ctw_ctp}): 
$$V(T')=\big\{v_{(r,k)}\big\}\cup\big\{v_{(t,j)}\mid t\in V(T),t\neq r, j\in [k]\big\}, \mbox{ and}$$ 
$$E(T')=\big\{v_{(t,j)}v_{(t,j+1)}\mid t\in V(T),t\neq r, j\in[k-1] \big\}\cup \big\{v_{(p(t),k)}v_{(t,1)}\mid t\in V(T), t\neq r \big\}.$$
\item We set $\chi(v_{(r,k)})=\big\{x_{(r,i)}\mid i\in [k] \big\}$ and for every node $v_{(t,j)}\in V(T')$ such that $t\neq r$, we set 
$$\chi(v_{(t,j)})=\big\{x_{(t,j')}\mid 1\le j'\le j\big\}\cup \big\{x_{(p(t),j')}\mid j< j'\le k\big\}.$$
\end{itemize}

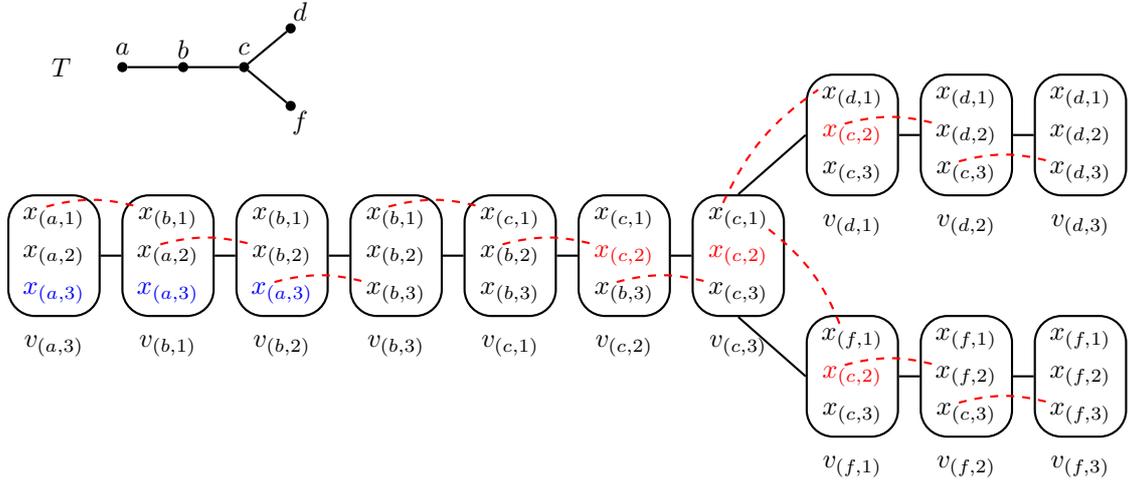
\begin{figure}[ht]
\begin{center}
\begin{tikzpicture}[thick,scale=1]
\tikzstyle{sommet}=[circle, draw, fill=black, inner sep=0pt, minimum width=3pt]
\tikzstyle{bag}=[minimum width=1.2cm,minimum height=1.6cm, rectangle,rounded corners=10pt,draw, fill=white]

\begin{scope}[xshift=-0.5cm,yshift=2.5cm,scale=0.8]
\draw[] node[sommet] (1) at (-2,0){};
\draw[] node[] () at (-2,0.3){$a$};
\draw[] node[sommet] (2) at (-1,0){};
\draw[] node[] () at (-1,0.3){$b$};
\draw[] node[sommet] (3) at (0,0){};
\draw[] node[] () at (0,0.3){$c$};
\draw[] node[sommet] (4) at (40:1){};
\draw[] node[] () at (45:1.3){$d$};
\draw[] node[sommet] (5) at (320:1){};
\draw[] node[] () at (315:1.3){$f$};

\foreach \i in {2,4,5}{
\draw[-] (3) to (\i) ;
}
\draw[-] (1) to (2) ;

\draw[] node[] () at (-3,0){$T$};

\end{scope}

\foreach \i/\j in {-3/1,-1.5/2,0/3,1.5/4,3/5,4.5/6,6/7}{
\node[bag] (N\j) at (\i,0) {};
}
\draw[-] (N1.east) to (N2.west) ;
\draw[-] (N2.east) to (N3.west) ;
\draw[-] (N3.east) to (N4.west) ;
\draw[-] (N4.east) to (N5.west) ;
\draw[-] (N5.east) to (N6.west) ;
\draw[-] (N6.east) to (N7.west) ;

\foreach \i/\j in {7.5/8,9/9,10.5/10}{
\node[bag] (N\j) at (\i,1.6) {};w
}
\draw[-] (N7.north) to (N8.west) ;
\draw[-] (N8.east) to (N9.west) ;
\draw[-] (N9.east) to (N10.west) ;

\foreach \i/\j in {7.5/11,9/12,10.5/13}{
\node[bag] (N\j) at (\i,-1.6) {};
}
\draw[-] (N7.south) to (N11.west) ;
\draw[-] (N11.east) to (N12.west) ;
\draw[-] (N12.east) to (N13.west) ;

\draw[] node[] () at (-3,0.5){$x_{(a,1)}$};
\draw[] node[] () at (-3,0){$x_{(a,2)}$};
\draw[] node[] () at (-3,-0.5){\blue{$x_{(a,3)}$}};
\draw[] node[] () at (-3,-1.2){{$v_{(a,3)}$}};

\draw[] node[] () at (-1.5,0.5){$x_{(b,1)}$};
\draw[] node[] () at (-1.5,0){$x_{(a,2)}$};
\draw[] node[] () at (-1.5,-0.5){\blue{$x_{(a,3)}$}};
\draw[] node[] () at (-1.5,-1.2){{$v_{(b,1)}$}};

\draw[] node[] () at (0,0.5){$x_{(b,1)}$};
\draw[] node[] () at (0,0){$x_{(b,2)}$};
\draw[] node[] () at (0,-0.5){\blue{$x_{(a,3)}$}};
\draw[] node[] () at (0,-1.2){{$v_{(b,2)}$}};

\draw[] node[] () at (1.5,0.5){$x_{(b,1)}$};
\draw[] node[] () at (1.5,0){$x_{(b,2)}$};
\draw[] node[] () at (1.5,-0.5){$x_{(b,3)}$};
\draw[] node[] () at (1.5,-1.2){{$v_{(b,3)}$}};

\draw[] node[] () at (3,0.5){$x_{(c,1)}$};
\draw[] node[] () at (3,0){$x_{(b,2)}$};
\draw[] node[] () at (3,-0.5){{$x_{(b,3)}$}};
\draw[] node[] () at (3,-1.2){{$v_{(c,1)}$}};

\draw[] node[] () at (4.5,0.5){$x_{(c,1)}$};
\draw[] node[] () at (4.5,0){\red{$x_{(c,2)}$}};
\draw[] node[] () at (4.5,-0.5){{$x_{(b,3)}$}};
\draw[] node[] () at (4.5,-1.2){{$v_{(c,2)}$}};

\draw[] node[] () at (6,0.5){$x_{(c,1)}$};
\draw[] node[] () at (6,0){\red{$x_{(c,2)}$}};
\draw[] node[] () at (6,-0.5){$x_{(c,3)}$};
\draw[] node[] () at (6,-1.2){{$v_{(c,3)}$}};

\draw[] node[] () at (7.5,2.1){{$x_{(d,1)}$}};
\draw[] node[] () at (7.5,1.6){\red{$x_{(c,2)}$}};
\draw[] node[] () at (7.5,1.1){{$x_{(c,3)}$}};
\draw[] node[] () at (7.5,0.4){{$v_{(d,1)}$}};

\draw[] node[] () at (9,2.1){$x_{(d,1)}$};
\draw[] node[] () at (9,1.6){$x_{(d,2)}$};
\draw[] node[] () at (9,1.1){$x_{(c,3)}$};
\draw[] node[] () at (9,0.4){{$v_{(d,2)}$}};

\draw[] node[] () at (10.5,2.1){{$x_{(d,1)}$}};
\draw[] node[] () at (10.5,1.6){$x_{(d,2)}$};
\draw[] node[] () at (10.5,1.1){$x_{(d,3)}$};
\draw[] node[] () at (10.5,0.4){{$v_{(d,3)}$}};

\draw[] node[] () at (7.5,-1.1){{$x_{(f,1)}$}};
\draw[] node[] () at (7.5,-1.6){\red{$x_{(c,2)}$}};
\draw[] node[] () at (7.5,-2.1){{$x_{(c,3)}$}};
\draw[] node[] () at (7.5,-2.8){{$v_{(f,1)}$}};

\draw[] node[] () at (9,-1.1){$x_{(f,1)}$};
\draw[] node[] () at (9,-1.6){$x_{(f,2)}$};
\draw[] node[] () at (9,-2.1){$x_{(c,3)}$};
\draw[] node[] () at (9,-2.8){{$v_{(f,2)}$}};

\draw[] node[] () at (10.5,-1.1){{$x_{(f,1)}$}};
\draw[] node[] () at (10.5,-1.6){$x_{(f,2)}$};
\draw[] node[] () at (10.5,-2.1){$x_{(f,3)}$};
\draw[] node[] () at (10.5,-2.8){{$v_{(f,3)}$}};

\draw[red,dashed] (-3.1,0.65) to[bend left=15] (-1.9,0.65){}; 
\draw[red,dashed] (-1.6,0.15) to[bend left=15] (-0.4,0.15){}; 
\draw[red,dashed] (-0.1,-0.35) to[bend left=15] (1.1,-0.35){}; 

\draw[red,dashed] (1.4,0.65) to[bend left=15] (2.6,0.65){}; 
\draw[red,dashed] (2.9,0.15) to[bend left=15] (4.1,0.15){}; 
\draw[red,dashed] (4.4,-0.35) to[bend left=15] (5.6,-0.35){}; 

\draw[red,dashed] (5.8,0.7) to[bend left=15] (7.05,2.2){}; 
\draw[red,dashed] (7.4,1.75) to[bend left=15] (8.6,1.75){}; 
\draw[red,dashed] (8.9,1.25) to[bend left=15] (10.1,1.25){}; 

\draw[red,dashed] (6.4,0.35) to[bend left=15] (7.35,-0.95){}; 
\draw[red,dashed] (7.4,-1.45) to[bend left=15] (8.6,-1.45){}; 
\draw[red,dashed] (8.9,-1.95) to[bend left=15] (10.1,-1.95){}; 
  
\end{tikzpicture}
\end{center}
\caption{\label{fig_ctw_ctp} A loose tree-decomposition $\mathcal{D}=(T',\chi)$ of $T^{(3)}=T\Box K_3$. To build $\mathcal{D}$, we set node $a$ as the root of $T$. In \blue{blue}, we have the trace $T'_{x_{(a,3)}}$ and in red the trace  $T'_{x_{(c,2)}}$. The dashed \red{red} edges are the marginal edges of $\mathcal{D}$.}
\end{figure}

One can observe that by construction of $\mathcal{D}$, the trace $T_{x_{(t,i)}}$ of every vertex is connected. Therefore condition  \A{}  of \autoref{def_ltd} holds. We observe that $T^{(k)}$ contains two types of edges. For an edge $e\in E(T^{(k)})$, either $e=x_{(t,i)}x_{(t,i')}$ for some node $t\in V(T)$ and some distinct integers $i,i'\in[k]$, or $e=x_{(t,i)}x_{(p(t),i)}$ where $t\in V(T)$ is not the root node and $i\in[k]$. In the former case, $x_{(t,i)}$ and $x_{(t,i')}$ both belong to the bag $\chi(v_{(t,k)})$. In the later case, we observe that: 1) if $i=1$, then $x_{(t,i)}\in\chi(v_{(t,1)})$ and $x_{(p(t),i)}\in\chi(v_{(p(t),k)})$; 2) otherwise $x_{(t,i)}\in\chi(v_{(t,i)})$ and $x_{(p(t),i)}\in\chi(v_{(t,i-1)})$. It follows that condition  \B{}  of \autoref{def_ltd} holds. Finally, observe that only the edges of $T^{(k)}$ of the second type are not covered by a single bag. These are the marginal edges and by construction, there is at most one marginal edge between two adjacent nodes in $\mathcal{D}$. It follows that  condition \C{} of \autoref{def_ltd} also holds and thereby $\mathcal{D}$ is a loose tree-decomposition of $T^{(k)}$. Finally, notice that ${\sf width}(\mathcal{D},T^{(k)})=k$, proving that $\min\big\{\width(\mathcal{D},T^{(k)})\mid \mathcal{D} \mbox{ is a loose tree-decomposition of } T^{(k)}  \big\}\le \ctp(T^{(k)})$.

\medskip
Let us now prove that $\ctp(G)\le \min\big\{\width(\mathcal{D},G)\mid \mathcal{D} \mbox{ is a loose tree-decomposition of } G  \big\}$. By \autoref{lem_full_ltd}, we can assume that $\mathcal{D}=(T,\chi)$ is a full loose tree-decomposition of $G$ such that ${\sf width}(\mathcal{D},G)=k$. We prove that $G\preceq T^{(k)}$. Based on $\mathcal{D}$, we construct a graph $H$ by adding edges to $G$ as follows: 
\[
\begin{array}{cl}
E(H)= & \big\{xy\mid \exists t\in V(T), \{x,y\}\subseteq\chi(t)\big\} \bigcup\\
&  \big\{xy \mid \exists t_1t_2\in E(T), x\in\chi(t_1)\setminus\chi(t_2),y\in\chi(t_2)\setminus\chi(t_1) \big\}
\end{array}
\]

That is the edge set of $H$ contains all possible edges between vertices belonging to a common bag of $\mathcal{D}$ and all possible marginal edges between adjacent bags. Clearly $E(H)$ contains $E(G)$. So, {let us} prove that $H\preceq T^{(k)}$. To that aim, we iteratively build a minor-model $\rho$ of $H$ in $T^{(k)}$ with the following algorithm. 

Initially, we set $\rho(x)=\emptyset$ for every vertex $x\in V(H)$. We guarantee that, at each step of the algorithm, if $x_{(t,i)}\in \rho(x)$ and $x_{(t,j)}\in \rho(x)$, then $i=j$. 
We pick an arbitrary node $t\in V(T)$. For every $x\in \chi(t)$, we set $\rho(x)=\{x_{(t,i)}\}$ (with $i\in[k]$) in a way that, for $y\in\chi(t)$, if $x\neq y$, $\rho(x)\neq \rho(y)$. Set $S=\{t\}$ as the set of processed nodes of $T$ (a node $t$ is processed when for every vertex $x\in\chi(t)$, $\rho(x)$ is defined). Assume there exists an unprocessed node $t\notin S$.  We  pick a node $t'\notin S$ having an adjacent node $t\in S$. Suppose that $x\in\chi(t)\cap\chi(t')$. Then $\rho(x)$ contains a vertex $x_{(t,i)}$ for some $i\in[k]$. We add $x_{(t',i)}$ to $\rho(x)$. Otherwise, suppose that $x\in\chi(t')\setminus\chi(t)$. As $|\chi(t)\cap\chi(t')|=k-1$, there is a unique vertex $y\in\chi(t)\setminus\chi(t')$. Let $j\in[k]$ be such that $x_{(t,j)}\in\rho(y)$. Then we add $x_{(t',j)}$ to $\rho(x)$.

First observe that by construction, for every $x\in V(H)$, the set $\rho(x)$ is a connected subset of $T^{(k)}$.  Let us argue that for every $xy\in E(H),$ $T^{(k)}$ contains an edge between a vertex of $\rho(x)$ and a vertex of $\rho(y)$. As explained in the description of $H$, $E(H)$ contains two type of edges $xy$: either $V(T)$ contains a node $t$ such that $xy\subseteq \chi(t)$ or $xy$ is a marginal edge, that is $E(T)$ contains a tree-edge $t_1,t_2$ such that $x\in\chi(t_1)\setminus\chi(t_2)$ and $y\in\chi(t_2)\setminus\chi(t_1)$. In the former case, by the construction of $\rho(x)$, there exist distinct $i,i'\in[k]$ such that $x_{(t,i)}\in\rho(x)$ and $x_{(t,i')}\in\rho(y)$. Observe that by definition of $T^{(k)}$, $x_{(t,i)}x_{(t,i')}\in E(T^{(k)})$. In the latter case, there exist in $T$ two adjacent nodes $t_1$ and $t_2$ such that $x\in\chi(t_1)\setminus\chi(t_2)$, $y\in\chi(t_2)\setminus\chi(t_1)$. Again, by the construction of $\rho(x)$ and $\rho(y)$, there exists a unique $i\in [k]$ such that 
$x_{(t,i)}\in\rho(x)$ and $x_{(t',i)}\in\rho(y)$. Observe that $x_{(t,i)}x_{(t',i)}\in E(T^{(k)})$. It follows that $\rho$ certifies that $H$ is a minor of $T^{(k)}$. As $G$ is a subgraph of $H$, $G$ is a minor of $T^{(k)}$ and thereby $\ctp(G)\leq k$.
\end{proof}

\subsection{Tight brambles certify large Cartesian tree product number}

In this subsection, we show (\autoref{th_bramble}) that the existence of tight bramble of order $k$ in a graph $G$ certifies the fact that the Cartesian tree product number of $G$ is at most $k$ ($(3\Rightarrow 2)$ of \autoref{th_min_max}). Our proof  follows the lines of the proof of Bellenbaum and Diestel~\cite{BellenbaumD02Twosh} where it is shown  that, in the context of treewidth, a bramble of large order is an obstacle to a small treewidth.
We first prove a technical lemma showing that extending in a connected way the trace $T_x$ of a vertex $x$ in a loose tree-decomposition $\mathcal{D}=(T,\chi)$ of a graph $G$ yields another loose tree-decomposition of $G$.

\begin{lemma} \label{lem_td_extension}
Let $\mathcal{D}=(T,\chi)$ be a loose tree-decomposition of a graph $G$. Let $\chi':V(T)\rightarrow 2^{V(G)}$ be such that for every $x\in V(G)$, $\big\{t\in V(T)\mid x\in \chi'(t)\big\}$ induces a connected subtree $T'_x$ of $T$ and the trace $T_x$ is a subtree of $T'_x$, then $\mathcal{D}'=(T,\chi')$ is a loose tree-decomposition of $G$.
\end{lemma}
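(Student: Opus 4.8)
The plan is to verify that $\mathcal{D}'=(T,\chi')$ satisfies each of the three conditions \A{}, \B{}, \C{} of \autoref{def_ltd}, using the fact that $\mathcal{D}$ already satisfies them together with the two hypotheses: (i) each set $\{t\mid x\in\chi'(t)\}$ induces a connected subtree $T'_x$, and (ii) $T_x\subseteq T'_x$, i.e. $\chi(t)\subseteq\chi'(t)$ pointwise (equivalently, for every $t$, $\chi(t)\subseteq\chi'(t)$).

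First, condition \A{} is immediate: hypothesis (i) states exactly that the trace of each $x$ in $\mathcal{D}'$ is a non-empty connected subtree, and non-emptiness follows because $T_x\subseteq T'_x$ and $T_x$ is non-empty in $\mathcal{D}$. Next, condition \B{}: given an edge $xy\in E(G)$, $\mathcal{D}$ provides a tree-edge $f=\{t_1,t_2\}$ with $xy\in E(G[\chi(t_1)\cup\chi(t_2)])$; since $\chi(t_i)\subseteq\chi'(t_i)$ we have $\chi(t_1)\cup\chi(t_2)\subseteq\chi'(t_1)\cup\chi'(t_2)$, so $xy\in E(G[\chi'(t_1)\cup\chi'(t_2)])$ as well. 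Thus \B{} holds for $\mathcal{D}'$.

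The main obstacle is condition \C{}: for every tree-edge $f=\{t_1,t_2\}$ we must bound the number of \emph{marginal} edges, i.e. edges of $G$ with one endpoint in $\chi'(t_1)\setminus\chi'(t_2)$ and the other in $\chi'(t_2)\setminus\chi'(t_1)$, by one. The key point is that enlarging the traces can only \emph{shrink} the symmetric-difference-type sets that host marginal edges. Precisely, I would argue that $\chi'(t_1)\setminus\chi'(t_2)\subseteq\chi(t_1)\setminus\chi(t_2)$: if $x\in\chi'(t_1)$ but $x\notin\chi'(t_2)$, then since $T'_x$ is connected and contains $t_1$ but not $t_2$, and since $T_x\subseteq T'_x$, the subtree $T_x$ also lies on the $t_1$-side of the tree-edge $f$ — in particular $t_2\notin T_x$, so $x\notin\chi(t_2)$; and $x\in\chi'(t_1)$ does not by itself give $x\in\chi(t_1)$, so this containment needs a little care. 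The correct formulation is: the set of potential marginal-edge endpoints on the $t_1$-side, namely $\chi'(t_1)\setminus\chi'(t_2)$, consists only of vertices $x$ whose trace $T'_x$ is entirely contained in the component of $T-f$ containing $t_1$; hence, by hypothesis (ii), $x$'s trace $T_x$ is also in that component. A candidate marginal edge $xy$ of $\mathcal{D}'$ at $f$ therefore has $T'_x$ in the $t_1$-component and $T'_y$ in the $t_2$-component, so $T_x$ and $T_y$ lie in different components of $T-f$; by condition \B{} applied to $\mathcal{D}$, the unique tree-edge witnessing $xy$ must be $f$ itself, and $xy$ is genuinely a marginal edge of $\mathcal{D}$ at $f$. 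Consequently the set of marginal edges of $\mathcal{D}'$ at $f$ injects into the set of marginal edges of $\mathcal{D}$ at $f$, which has size at most $1$; so \C{} holds.

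Assembling these three verifications completes the proof that $\mathcal{D}'$ is a loose tree-decomposition of $G$. (One should also note in passing that $\width(\mathcal{D},G)\le\width(\mathcal{D}',G)$, though the statement only asserts that $\mathcal{D}'$ is a loose tree-decomposition; the width monotonicity is not claimed here and will be used elsewhere.)
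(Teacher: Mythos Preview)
Your proposal is correct and follows essentially the same approach as the paper: verify \A{} directly from the hypothesis, verify \B{} from the pointwise inclusion $\chi(t)\subseteq\chi'(t)$, and verify \C{} by showing that every marginal edge of $\mathcal{D}'$ at a tree-edge $f$ is already a marginal edge of $\mathcal{D}$ at $f$. Your treatment of \C{} is in fact more careful than the paper's (which asserts the key inclusion in one sentence), and your self-correction---abandoning the naive containment $\chi'(t_1)\setminus\chi'(t_2)\subseteq\chi(t_1)\setminus\chi(t_2)$ in favour of the component argument via \B{}---is exactly the right move.
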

\begin{proof}
Let us show that the conditions (\textsf{L1}), (\textsf{L2}) and (\textsf{L3}) of \autoref{def_ltd} are satisfied by $\mathcal{D}'$. Clearly, (\textsf{L1}) is verified by definition of $T'_x$.
Suppose $xy$ is an edge in $E(G)$. As $\mathcal{D}$ is a loose tree-decomposition, there exists a tree-edge $f=t_1t_2\in E(T)$ such that $xy\in E\big(G[\chi(t_1)\cup\chi(t_2)]\big)$. As for every vertex $x$, $T_x$ is a subtree of $T'_x$, {we have} $xy\in E\big(G[\chi'(t_1)\cup\chi'(t_2)]\big)$. It follows that (\textsf{L2}) holds.
{Finally, observe that as, by constuction of $\mathcal{D}'$, for every vertex $x$, the trace $T_x$ is a subtree of $T'_x$, every marginal edge in $\mathcal{D}'$ is a marginal edge in $\mathcal{D}$,} implying that $\mathcal{D}'$ satisfies (\textsf{L3}).
\end{proof}

Given a tight bramble $\mathcal{B}$ of a graph $G$ of order at most $k$, a loose tree-decomposition $\mathcal{D}=(T,\chi)$ of $G$ is \emph{$\mathcal{B}$-admissible} if for every node $t\in V(T)$ such that $|\chi(t)|>k$, the bag $\chi(t)$ fails to cover $\mathcal{B}$. 

\begin{lemma} \label{th_bramble}
Let $G$ be a graph and $k\geq 1$ be an integer. If the maximum order of a tight bramble of $G$ is $k$, then $\ctp(G)\leq k$.
\end{lemma}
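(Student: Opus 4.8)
The plan is to mimic the Bellenbaum--Diestel two-shift argument for brambles and treewidth, adapted to loose tree-decompositions. Assume for contradiction that every tight bramble of $G$ has order at most $k$ but $\ctp(G)>k$; by \autoref{th_ctp_width} this means every loose tree-decomposition of $G$ has width $>k$. Among all loose tree-decompositions of $G$ that are $\mathcal{B}$-admissible for the (fixed) tight bramble $\mathcal{B}$ realizing the maximum order, pick one, say $\mathcal{D}=(T,\chi)$, that is ``closest'' to having width $\le k$ in a suitable well-founded measure: first minimize the number of \emph{fat} nodes (nodes $t$ with $|\chi(t)|>k$), then minimize $\sum_{t\text{ fat}} |\chi(t)|$, then break ties by some secondary quantity (e.g. total size of the tree or of the bags). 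Such a $\mathcal{D}$ exists since the trivial one-node decomposition with $\chi=V(G)$ is trivially $\mathcal{B}$-admissible (a bag of size $>k$ cannot cover a bramble of order $k$ unless $|V(G)|\le k$, in which case we are already done). If $\mathcal{D}$ has no fat node, its width is $\le k$ and we have our contradiction, so we may assume there is a fat node $t_0$.

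The core step is a ``pushing'' operation that reduces the chosen measure. Orient the tree toward $t_0$ and consider a neighbour $t_1$ of $t_0$; by \autoref{lem_full_ltd} we may first normalize to a full loose tree-decomposition, so $\chi(t_0)\ominus\chi(t_1)=\{x_0,x_1\}$ with $x_0\in\chi(t_0)\setminus\chi(t_1)$. Since $\mathcal{D}$ is $\mathcal{B}$-admissible, $\chi(t_0)$ does not cover $\mathcal{B}$, so fix $B\in\mathcal{B}$ with $B\cap\chi(t_0)=\emptyset$; then $B$ lies entirely inside one component of $G-\chi(t_0)$, hence inside the ``branch side'' of exactly one neighbour $t_1$ of $t_0$. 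The plan is: along the edge $t_0t_1$, shift a carefully chosen vertex out of $\chi(t_0)$ into the $t_1$-side (using \autoref{lem_td_extension} to legally extend traces on the $t_1$-branch and shrink on the $t_0$-side), so that $|\chi(t_0)|$ drops by one while the only nodes whose bags can have grown are on the $t_1$-branch, and those stay non-covers of $\mathcal{B}$ because their bags are contained in $\chi(t_0)\cup(\text{$t_1$-side})$ and the branch away from $B$ cannot meet $B$. Doing this towards every neighbour in turn (or, more cleanly, iterating the single-edge shift) eventually shrinks $\chi(t_0)$ to size $\le k$ or creates no new fat node, contradicting minimality of the measure. This is exactly the mechanism by which a large-order bramble would be forced to exist if the shift failed, so conversely the bound on bramble order forces the width down.

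The technical heart — and the step I expect to be the main obstacle — is verifying that the shift respects condition \C{} (the marginal-edge budget): each tree-edge may carry at most one edge of $G$ not contained in a single bag. When we move a vertex from $\chi(t_0)$ toward $t_1$ and along the branch, we must ensure no tree-edge acquires a second marginal edge. This is why the argument works with \emph{full} loose tree-decompositions (so each original tree-edge already has its one marginal slot accounted for by $\chi(t_0)\ominus\chi(t_1)$), and why the vertex pushed across $t_0t_1$ must be chosen \emph{not incident to the marginal edge on that tree-edge}, exactly as in the proof of \autoref{lem_full_ltd}. One shows that after the shift every marginal edge of the new decomposition maps to a marginal edge of $\mathcal{D}$ on the same tree-edge (as in \autoref{lem_ctw_minor_close} and \autoref{lem_td_extension}), so \C{} is preserved; conditions \A{} and \B{} follow directly from \autoref{lem_td_extension}. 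The remaining bookkeeping is to confirm $\mathcal{B}$-admissibility is maintained: the new bags along the $t_1$-branch are subsets of $\chi(t_0)\cup S$ where $S$ is the union of old bags strictly on the $t_1$-side, and since $B$ is disjoint from $\chi(t_0)$ and from that side's complement it witnesses that each such bag still fails to cover $\mathcal{B}$; by \autoref{bramble_separation} and \autoref{lem_node_sep}/\autoref{lem_tree_edge_sep} the relevant separators are themselves non-covers, which pins down why no new fat covering-bag can appear. Assembling these pieces contradicts the minimality of $\mathcal{D}$ and yields $\ctp(G)\le k$.
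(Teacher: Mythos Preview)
Your proposal has a genuine gap, and in fact diverges from the Bellenbaum--Diestel scheme it claims to follow.

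First, the starting point of your minimality argument is broken. You assert that the one-node decomposition with bag $V(G)$ is $\mathcal{B}$-admissible because ``a bag of size $>k$ cannot cover a bramble of order $k$''. This is false: any superset of a minimum cover still covers $\mathcal{B}$, so $V(G)$ certainly covers $\mathcal{B}$, and the trivial decomposition is \emph{not} $\mathcal{B}$-admissible when $|V(G)|>k$. You therefore have no $\mathcal{B}$-admissible decomposition to minimise over, and establishing that one exists is precisely the content of the lemma. This is why the paper (and Bellenbaum--Diestel) does not fix a single bramble and minimise over decompositions; instead it runs a \emph{reverse induction on $|\mathcal{B}|$}: for a maximum-size tight bramble one constructs a $\mathcal{B}$-admissible star decomposition directly, and for smaller $\mathcal{B}$ one enlarges it to $\mathcal{B}'=\mathcal{B}\cup\{C\}$ for each component $C$ of $G-X$ (with $X$ a minimum cover) and invokes the induction hypothesis.

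Second, even granting a starting decomposition, your push is pointed the wrong way and the admissibility claim does not hold. You locate $B\in\mathcal{B}$ disjoint from $\chi(t_0)$ on the $t_1$-side and then push a vertex \emph{into} the $t_1$-branch. The modified bags along that branch now contain the pushed vertex together with vertices already living on the side that contains $B$; there is no reason they should miss $B$, so your assertion that they ``stay non-covers of $\mathcal{B}$'' is unsupported (your parenthetical about ``the branch away from $B$'' contradicts the direction you chose). More fundamentally, you never invoke Menger's theorem, which is the technical heart of the argument: in the paper, after obtaining a $\mathcal{B}'$-admissible decomposition $(T,\chi)$ with a large bag $\chi(s)$ that covers $\mathcal{B}$ but not $C$, one extends the traces of the vertices of $X$ towards $s$ and uses $|X|$ vertex-disjoint $X$--$\chi(s)$ paths (which lie in $G-C$) together with \autoref{lem_node_sep} to certify that no bag grows, i.e.\ $|\chi_C(t)|\le|\chi(t)|$. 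Your local shift has no analogue of this size control, and without it there is nothing preventing the creation of new fat covering bags.
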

\begin{proof}
We prove that if $G$ {does not contain a} tight bramble of order $k+1$ or larger, then for every tight bramble $\mathcal{B}$ of $G$, there exists a $\mathcal{B}$-admissible loose tree-decomposition of $G$. This statement implies $\ctp(G)\leq k$. Indeed, $\mathcal{B}=\emptyset$ is a tight bramble. As every subset of vertices covers the empty bramble, a $\emptyset$-admissible loose tree-decomposition cannot contain a bag of size larger than $k$. It follows by \autoref{th_ctp_width} that $\ctp(G)\leq k$.


Let $\mathcal{B}$ be a non-empty tight bramble of $G$ and let $X$ be a cover of $\mathcal{B}$ of minimum size.  We proceed by induction on the size of $\mathcal{B}$. 

{For the induction base}, suppose that the size of $\mathcal{B}$ is maximum.
Let $\mathsf{cc}(G-X)$ be the set of connected components of $G-X$. We build a $\mathcal{B}$-admissible loose tree-decomposition $\mathcal{D}=(T,\chi)$ of $G$ as follows. The tree $T$ is a star with $|\mathsf{cc}(G-X)|$ leaves. More precisely, we have $V(T)=\{t_X\}\cup\big\{t_C\mid C \in \mathsf{cc}(G-X)\big\}$ and $E(T)=\big\{t_Xt_C\mid C\in\mathsf{cc}(X)\big\}$. We set $\chi(t_X)=X$. It remains to define the bag $\chi(t_C)$ 
for every connected component $C \in \mathsf{cc}(G-X)$. Observe that since $C\cap X=\emptyset$ and since $X$ is a cover of $\mathcal{B}$, $C\notin\mathcal{B}$. Moreover, as $\mathcal{B}$ has maximum size, $\mathcal{B}\cup\big\{\{C\}\big\}$ is not a tight bramble of $G$. This implies 
that $\mathcal{B}$ contains a set $B$ such that $B$ and $C$ do not touch. It follows that $B\cap C=\emptyset$ and $G$ contains at most one edge $uv$ such that $u\in B\setminus C$ and $v\in C\setminus B$. If such an edge $uv$ exists, then we set $\chi(t_C)=(C\cup N_G(C))\setminus \{u\}$. Otherwise, we set $\chi(t_C)=C\cup N_G(C)$. In both cases, we have $\chi(t_C)\cap B=\emptyset$ and henceforth $\chi(t_C)$ does not cover $\mathcal{B}$. 
So if $\mathcal{D}=(T,\chi)$ is a loose tree-decomposition of $G$, then it is a $\mathcal{B}$-admissible one. Let us prove that the three conditions of \autoref{def_ltd} are satisfied. Condition \A{} follows from the fact that if a vertex $x$ of $G$ belongs to several bags, then it belongs to $\chi(t_X)$. Condition \B{} holds as $\chi(t_X)$ is a separator and for every connected component $C\in \mathsf{cc}(G-X)$, $\chi(t_C)\subseteq X\cup C$. Finally, condition \C{} is also satisfied because, by construction, if there exists a marginal edge between node $t_X$ and node $t_C$, then it is unique (as otherwise the sets $C$ and $B$ would touch).

{We now consider the case where the size of $\mathcal{B}$ is not maximum.}
By induction hypothesis, suppose that the statement holds for every tight bramble $\mathcal{B}'$ containing more elements than $\mathcal{B}$. 
We also assume that for every such larger tight bramble $\mathcal{B}'$, none of the $\mathcal{B}'$-admissible loose tree-decompositions is also $\mathcal{B}$-admissible, as otherwise we are done.
Observe that we can assume that $V(G)$ is not a minimum cover of $\mathcal{B}$, as otherwise $|V(G)|\leq k$ and thereby the loose tree-decomposition containing a unique bag $V(G)$ would be $\mathcal{B}$-admissible. By \autoref{th_ctp_width}, we then have $\ctp(G)\leq k$. So we assume that $X\neq V(G)$. For a set $S\subseteq V(G)$, we say that a loose tree-decomposition $\mathcal{D}=(T,\chi)$ of $G$ is \emph{$S$-rooted} if $T$ contains a node $t$ such that $S\subseteq \chi(t)$.

\begin{claim} \label{cl_amalgamate}
For every connected component $C$ of $G-X$, there exists a $X$-rooted loose tree-decomposition $\mathcal{D}_C=(T_C,\chi_C)$ of $H=G[C\cup X]$ such that for every node $t\in V(T_C)$ such that $|\chi_C(t)|\ge k+1$, $\chi(t)$ does not cover $\mathcal{B}$.
\end{claim}

Suppose \autoref{cl_amalgamate} holds. Since for every $C\in cc(G-X)$, $\mathcal{D}_C=(T_C,\chi_C)$ contains a node $t_C$ such that $X\subseteq \chi_C(t_C)$, these loose tree-decompositions can be amalgamated into a loose tree-decomposition $\mathcal{D}=(T,\chi)$ of $G$ as follows: $T$ contains a node $t_X$ such that $\chi(t_X)=X$ and for each $C\in cc(G-X)$, $t_X$ is made adjacent to the node $t_C$ of $T_C$.

\medskip
\noindent
\textit{Proof of \autoref{cl_amalgamate}:}
{Let} $C$ be a connected component of $G-X$. We denote $H=G[X\cup C]$ and consider $\mathcal{B}'=\mathcal{B}\cup \{C\}$. We consider two cases:
\begin{itemize}
\item The set $\mathcal{B}'$ is not a tight bramble of $G$. Then $\mathcal{B}$ contains a set $B$ such that $B$ and $C$ are not touching, that is $B\cap C=\emptyset$ and $E(G)$ contains at most one edge $e$ that is incident to a vertex of $B$ and to a vertex of $C$. We proceed as in the induction base case above. Consider $\mathcal{D}_C=(T_C,\chi_C)$  where $T_C$ is the tree defined on two nodes $t_1$ and $t_2$. We set $\chi_C(t_1)=X$. If there is an edge $uv$ such that $u\in B\setminus C$ and $v\in C\setminus B$, then $\chi_c(t_2)=(C\cup N_G(C))\setminus \{u\}$, otherwise $\chi_c(t_2)=C\cup N_G(C)$. First observe that $\mathcal{D}_C$ is a loose tree-decomposition of $H$ satisfying the statement of the claim.
Indeed, by construction, as $B$ and $C$ are not touching. It follows that on the one hand, there is at most one marginal edge between $\chi(t_1)$ and $\chi(t_2)$, and so $\mathcal{D}_C=(T_C,\chi_C)$ is a loose tree-decomposition of $H$. On the other hand, this implies that $B\cap C=\emptyset$ and so $\chi(t_C)$ does not cover $\mathcal{B}$.
	
\item The set $\mathcal{B}'$ is a tight bramble of $G$. Since $X$ covers $\mathcal{B}$ and $C \cap X = \emptyset$, $C \not \in \mathcal{B}$. Therefore, we have $|\mathcal{B}'| > |\mathcal{B}|$. Consequently, by the induction hypothesis, we can assume the existence of a $\mathcal{B}'$-admissible loose tree-decomposition $\mathcal{D}=(T,\chi)$ of $G$.
By the induction assumption, $\mathcal{D}$ is not $\mathcal{B}$-admissible. This implies that there exists a node $s\in V(T)$ such that $|\chi(s)|\geq k+1$ and $\chi(s)$ covers $\mathcal{B}$. 
We build from $\mathcal{D}$ a $X$-rooted loose tree-decomposition $\mathcal{D}_C=(T_C,\chi_C)$ of $H=G[X\cup C]$ 
as follows (see \autoref{fig_amalgamate}):  for every vertex $x\in X$, let  $t_x\in V(T)$  be the node of $T_x$ that is the closest to $s$ in $T$. For every node $t\in V(T)$, we set:
$$\chi_C(t)=\big(\chi(t)\cap V(H)\big)\cup\big\{x\in X\mid t \mbox{ belongs to the unique $(t_x,s)$-path in $T$} \big\}.$$

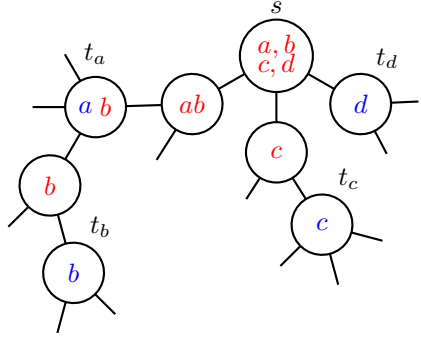
\begin{figure}[ht]
\begin{center}
\begin{tikzpicture}[thick,scale=1]
\tikzstyle{sommet}=[circle, draw, fill=black, inner sep=0pt, minimum width=3pt]

\begin{scope}[xshift=5cm,scale=0.8]

\node[] (0) at (0:0) {};
\node[] at (90:0.8) {$s$};

\node[] (1) at (-150:1.6) {};
\node[] (2) at (-90:1.6) {};
\node[] (3) at (-30:1.6) {};
\foreach \i in {1,2,3}{
\draw (0) -- (\i) ;
}
\draw[fill=white] (0) circle (0.6) ;
\node[color=red] at (0.north) {$a,b$};
\node[color=red] at (0.south) {$c,d$};

\begin{scope}[shift=(-150:1.7),rotate=-60]
\node[] (20) at (-120:1.5) {};
\node[] (21) at (-60:1.2) {};
\draw (1) -- (20) ;
\draw (1) -- (21) ;
\draw[fill=white] (1) circle (0.5) ;
\node[color=red] at (1) {$ab$};

\begin{scope}[shift=(-120:1.5),rotate=-30]
\node[] (30) at (-150:1.2) {};
\node[] (31) at (-90:1.2) {};
\node[] (32) at (-30:1.5) {};
\draw (20) -- (30);
\draw (20) -- (31);
\draw (20) -- (32);
\draw[fill=white] (20) circle (0.5) ;
\node[color=blue] at (20.west) {$a$};
\node[color=red] at (20.east) {$b$};
\node[shift=(90:0.7)] at (20) {$t_a$};

\begin{scope}[shift=(-30:1.5),rotate=75]
\node[] (40) at (-120:1.2) {};
\node[] (41) at (-60:1.5) {};
\draw (32) -- (40);
\draw (32) -- (41);
\draw[fill=white] (32) circle (0.5) ;
\node[color=red] at (32) {$b$};

\begin{scope}[shift=(-60:1.5),rotate=30]
\node[] (50) at (-120:1.2) {};
\node[] (51) at (-60:1.2) {};
\draw (41) -- (50);
\draw (41) -- (51);
\draw[fill=white] (41) circle (0.5) ;
\node[color=blue] at (41) {$b$};
\node[shift=(60:0.7)] at (41) {$t_b$};
\end{scope}

\end{scope}

\end{scope}
\end{scope}

\begin{scope}[shift=(-90:1.5),rotate=0]
\node[] (22) at (-120:1.2) {};
\node[] (23) at (-60:1.5) {};
\draw (2) -- (22);
\draw (2) -- (23);
\draw[fill=white] (2) circle (0.5) ;
\node[color=red] at (2) {$c$};

\begin{scope}[shift=(-60:1.5),rotate=15]
\node[] (33) at (-150:1.2) {};
\node[] (34) at (-90:1.2) {};
\node[] (35) at (-30:1.2) {};
\draw (23) -- (33);
\draw (23) -- (34);
\draw (23) -- (35);
\draw[fill=white] (23) circle (0.5) ;
\node[color=blue] at (23) {$c$};
\node[shift=(60:0.7)] at (23) {$t_c$};
\end{scope}
\end{scope}

\begin{scope}[shift=(-30:1.5),rotate=60]
\node[] (24) at (-120:1.2) {};
\node[] (25) at (-60:1.2) {};
\draw (3) -- (24);
\draw (3) -- (25);
\draw[fill=white] (3) circle (0.5) ;
\node[color=blue] at (3) {$d$};
\node[shift=(60:0.7)] at (3) {$t_d$};
\end{scope}

\end{scope}

\end{tikzpicture}
\end{center}
\caption{Suppose that $X=\{a,b,c,d\}$ and that node $s$ has size $|\chi(s)|>k$. The nodes $t_a$, $t_b$, $t_c$ and $t_d$ are respectively the nodes of the traces $T_a$, $T_b$, $T_c$ and $T_d$ that are the closest to $s$. To construct $\mathcal{D}_C$, vertices $a$, $b$, $c$ and $d$ are respectively added to the the bag on the paths between $s$ and $t_a$, $t_b$, $t_C$ and $t_d$. \label{fig_amalgamate}}
\end{figure}

Finally $T_C$ is obtained by restricting $T$ to the nodes $V(T_C)=\{t\in V(T)\mid \chi_C(t)\neq\emptyset\}$. Observe that by construction of $\chi_C$, for every vertex $x\in X\cup C$, the set ${T_{C,x}=}\{t\in V(T_C)\mid x\in\chi_C(t)\}$ contains the trace $T_x$ of $x$ in $\mathcal{D}$ {and the unique path in $T$ between the trace $T_x$ and the node $s$. It follows that $T_{C,x}$ is a connected subtree of $T$ and so \autoref{lem_td_extension} applies. Consequently $\mathcal{D}_C=(T_C,\chi_C)$ of $H=G[X\cup C]$ is a loose tree-decomposition of $H$.} 

We now prove that $\mathcal{D}_C$ satisfies the statement of the claim.
To that aim, we first prove that for every node $t\in V(T_C)$, $|\chi_C(t)|\leq |\chi(t)|$. As $\chi(s)$ and $X$ both cover the tight bramble $\mathcal{B}$, by \autoref{bramble_separation}, any separator between $X$ and $\chi(s)$ also covers $\mathcal{B}$. By the minimality of $|X|$, such a separator has size at least $|X|$. It follows by Menger Theorem that there exists a set $\mathcal{P}$ of $|X|$ vertex disjoint paths between $X$ and $\chi(s)$. For every $x\in X$, we  denote by $P_x$ the path of $\mathcal{P}$ containing $x$. As $(T,\chi)$ is $\mathcal{B}'$-admissible and $|\chi(s)| \geq k+1$, $\chi(s)$ fails to cover $\mathcal{B}'$. Since $\chi(s)$ covers $\mathcal{B}$ and since $\mathcal{B}' = \mathcal{B} \cup \{C\}$, it follows that $\chi(s)$ does not intersect $C$. It follows that every path $P_x\in\mathcal{P}$ belongs to $G-C$ (see \autoref{fig_disjoint_path}). 
Let us now consider a node $t$ for which there exists a vertex $x\in\chi_C(t)\setminus\chi(t)$. Observe that $t$ lies in $T_C$ on the unique $(t_x,s)$-path. By \autoref{lem_node_sep}, $\chi(t)$ separates $\chi(t_x)$ and $\chi(s)$. It follows that $\chi(t)$ contains a vertex of $P_x$. As $P_x$ is a path of $G-C$, for every vertex $x\in\chi_C(t)\setminus\chi(t)$, we identify a vertex $y\in\chi(t)\setminus\chi_C(t)$. It follows that $|\chi_C(t)|\leq |\chi(t)|$.

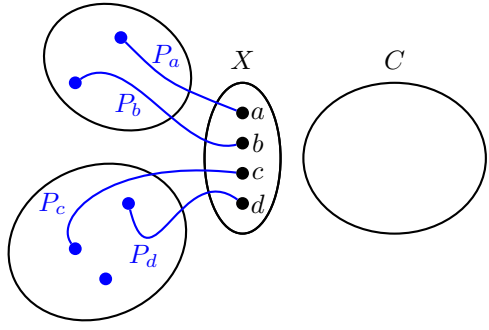
\begin{figure}[ht]
\begin{center}
\begin{tikzpicture}[thick,scale=1]
\tikzstyle{sommet}=[circle, draw, fill=black, inner sep=0pt, minimum width=4pt]
\tikzstyle{bsommet}=[circle, draw, fill=blue, inner sep=0pt, minimum width=4pt]

\draw (0,0) ellipse(0.5 and 1);
\draw (2,0) ellipse(1.2 and 1);
\draw[rotate=-25] (-2,0.4) ellipse(1 and 0.8);
\draw [rotate=25] (-2.2,-0.2) ellipse(1.2 and 1);
\draw (0,0) ellipse(0.5 and 1);
\draw node at (0,1.3) {$X$};
\draw node at (2,1.3) {$C$};

\draw[] node[sommet] (a) at (0,0.6){};
\draw[] node[sommet] (b) at (0,0.2){};
\draw[] node[sommet] (c) at (0,-0.2){};
\draw[] node[sommet] (d) at (0,-0.6){};
\draw node at (0.2,0.6) {$a$};
\draw node at (0.2,0.2) {$b$};
\draw node at (0.2,-0.2) {$c$};
\draw node at (0.2,-0.6) {$d$};

\draw[color=blue] node[bsommet] (S1) at (-2.2,1){};
\draw[color=blue] node[bsommet] (S2) at (-1.6,1.6){};

\draw[color=blue] node[bsommet] (S3) at (-1.8,-1.6){};
\draw[color=blue] node[bsommet] (S4) at (-2.2,-1.2){};
\draw[color=blue] node[bsommet] (S5) at (-1.5,-0.6){};

\draw[color=blue] (a) .. controls (-1,1) .. (S2);
\draw[color=blue] node at (-1,1.35) {$P_a$};
\draw[color=blue] (b) .. controls (-0.8,0) and (-1.5,1.5) .. (S1);
\draw[color=blue] node at (-1.5,0.7) {$P_b$};
\draw[color=blue] (c) .. controls (-1.8,-0) and (-2.5,-0.8) .. (S4);
\draw[color=blue] node at (-2.5,-0.6) {$P_c$};
\draw[color=blue] (d) .. controls (-0.8,0) and (-1.2,-1.8) .. (S5);
\draw[color=blue] node at (-1.3,-1.3) {$P_d$};
\end{tikzpicture}
\end{center}
\caption{The set $\chi(s)$ is disjoint from $C$ and $|X|\leq\chi(s)$. The paths $P_a$, $P_b$, $P_c$ and $P_d$ are pairwise vertex disjoint and do not intersect the component $C$. \label{fig_disjoint_path}}
\end{figure}

To conclude, let $t\in V(T_C)$ be a node such that $|\chi_C(t)|>k$. Since $|X| \leq k$ and since $\chi_C(t) \subseteq X\cup C$, we have that $\chi_C(t)\cap C\neq\emptyset$. 
As $(T,\chi)$ is $\mathcal{B}'$-admissible, $\chi(t)$ does not intersect some $B \in \mathcal{B}$. Let us show that $\chi_C(t)$ does not intersect $B$ either. Assume it does. Then there exists $x\in B$ such that $x \in \chi_C(t) \setminus \chi(t)$. Observe that by construction of $\chi_C(t)$, $x\in X$ and thereby $\chi(t_x)\cap B\neq\emptyset$. As $\chi(s)$ covers $\mathcal{B}$, we also have that $\chi(s)\cap B\neq\emptyset$. By \autoref{lem_node_sep}, $\chi(t)$ separates $\chi(s)$ and $\chi(t'_x)$. Since $B$ is a connected subset of vertices, $\chi(t)\cap B\neq\emptyset$, a contradiction. Finally, as by construction every vertex of $X$ belongs to $\chi_C(s)$, $\mathcal{D}_C$ is $X$-rooted. The claim follows \hfill $\diamond$
\end{itemize}
\end{proof}

\subsection{Escape strategies derived from tight brambles}

We now show how {given} a tight bramble of order $k$ in a graph $G$, a {fugitive can derive an escape} strategy $(e_1,\f_G)\in\mathcal{F}_G$ such that for every search strategy $\s_G\in \mathcal{S}_G$ of cost $k$, the play $\play(\s_G,e_1,\f_G)$ generated by the program $\big(\s_G,(e_1,\f_G)\big)$ is infinite. In other words, the fugitive cannot be captured by a set of $k$ searchers. This proves $(4\Rightarrow 3)$ from \autoref{th_min_max}.

\begin{lemma} \label{th_escape}
Let $G$ be a graph and {$k\geq 1$} be an integer. If $G$ has a tight bramble of order $k$, then $\avms(G)\geq k$.
\end{lemma}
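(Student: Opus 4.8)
The plan is to build, from a tight bramble $\mathcal{B}$ of $G$ of order $k$, a fugitive strategy $(e_1,\f_G)\in\mathcal{F}_G$ under which the play stays infinite whenever the searchers use cost at most $k-1$; since every winning search strategy would then have cost at least $k$, this gives $\avms(G)\ge k$. The single fact used throughout is that any searchers' position $S$ with $|S|\le k-1$ fails to be a cover of $\mathcal{B}$, so some $B\in\mathcal{B}$ is disjoint from $S$ (note $\mathcal{B}\ne\emptyset$, as the order $k\ge 1$).

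I would define $\f_G$ implicitly through an invariant on the generated play $\play(\s_G,e_1,\f_G)=\langle S_0,e_1,S_1,e_2,\dots\rangle$ (on the inputs not arising along such a play, $\f_G$ may be set arbitrarily subject to the constraints of \autoref{sec_mixed_search_games}): right after the fugitive produces $e_{i+1}$, there is $B\in\mathcal{B}$ with $B\cap S_i=\emptyset$ and $e_{i+1}\in E(G[B])$. The base case is immediate: pick any $B\in\mathcal{B}$, which being connected with $|B|\ge 2$ contains an edge $e_1$, and note $B\cap S_0=\emptyset$. A useful consequence of the invariant is that if $e_i\in E(G[B_i])$ with $B_i\cap S_{i-1}=\emptyset$, then both endpoints of $e_i$ avoid $S_{i-1}$, so $e_i\notin\clear_G(S_{i-1},S_i)$ and $e_i\in\rspace_G(S_{i-1},e_i,S_i)$ is always a legal fallback move.

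For the inductive step, suppose $e_i$ satisfies the invariant with witness $B_i$ and the searchers play a legitimate move $(S_{i-1},S_i)$ with $|S_i|\le k-1$; choose $B_{i+1}\in\mathcal{B}$ disjoint from $S_i$. It suffices to find $e_{i+1}\in\rspace_G(S_{i-1},e_i,S_i)\cap E(G[B_{i+1}])$: the requirement $e_{i+1}\notin\binom{S_i}{2}$ built into $A_G(S_{i-1},e_i,S_i)$ is free once $e_{i+1}\in E(G[B_{i+1}])$, since $B_{i+1}\cap S_i=\emptyset$. I would realize $e_{i+1}$ via an $(S_{i-1},S_i)$-avoiding $(e_i,e_{i+1})$-pathway assembled in three pieces: a segment inside $G[B_i]$ leading from $e_i$ (using connectedness of $G[B_i]$) to a chosen vertex; one \emph{crossing} edge from $B_i$ into $B_{i+1}$, furnished by the touching of $B_i$ and $B_{i+1}$; and a segment inside $G[B_{i+1}]$ ending at some $e_{i+1}\in E(G[B_{i+1}])$. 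If $B_i\cap B_{i+1}$ contains a vertex $v$, the crossing is made at $v$ by an edge of $G[B_i]$ at $v$ followed by an edge of $G[B_{i+1}]$ at $v$; if $B_i\cap B_{i+1}=\emptyset$, tight touching supplies two distinct edges $x_1x_2,y_1y_2$ with $x_1,y_1\in B_i$ and $x_2,y_2\in B_{i+1}$, and I use one of them as the crossing edge.

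The hard part will be verifying that this pathway is $(S_{i-1},S_i)$-avoiding. Condition~(1), that no two consecutive edges meet at a vertex of $S_{i-1}\cap S_i$, is routine: every internal join of the $G[B_i]$-segment lies in $B_i$, hence outside $S_{i-1}\supseteq S_{i-1}\cap S_i$; every internal join of the $G[B_{i+1}]$-segment lies in $B_{i+1}$, hence outside $S_i\supseteq S_{i-1}\cap S_i$; and the joins at the crossing lie in $B_i\cup B_{i+1}$ as well. Condition~(2) is exactly where \emph{tight} touching is needed: if $(S_{i-1},S_i)$ slides a searcher from $x$ to $y$ along $xy$, then $xy$ cannot occur inside the $G[B_i]$-segment (otherwise $x\in B_i\cap S_{i-1}=\emptyset$) nor inside the $G[B_{i+1}]$-segment (otherwise $y\in B_{i+1}\cap S_i=\emptyset$), so $xy$ could only appear among the edges used for crossing; but $xy$ is a single edge, so in the disjoint case at most one of $x_1x_2,y_1y_2$ equals it and I take the other, while in the overlapping case no edge of $G[B_i]$ or of $G[B_{i+1}]$ incident to $v$ equals it (since $x\notin B_i$ and $y\notin B_{i+1}$). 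In every case the crossing edge can be taken distinct from $xy$, so $xy$ does not occur in the pathway and condition~(2) is vacuous. A few routine clean-ups remain (making consecutive edges distinct, ensuring length at least $2$, and the degenerate case $B_i=B_{i+1}$ where one simply keeps $e_{i+1}=e_i$, legal since $e_i$ is uncleared), but none is substantial. As the invariant is then preserved along the entire play, the play never reaches the symbol $\star$ and is therefore infinite; hence no search strategy of cost at most $k-1$ is winning, and $\avms(G)\ge k$.
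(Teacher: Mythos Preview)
Your proposal is correct and follows essentially the same approach as the paper: both maintain the invariant that the fugitive's current edge lies inside a bramble element disjoint from the current searchers' position, and both exploit tight touching to route the fugitive into a new such element after each move (the paper organizes this via a case split on placement/removal/slide, while you phrase it as a single pathway construction, but the content is the same). Your slightly more explicit verification of the two $(S,S')$-avoiding conditions, and your observation that the crossing edge can always be chosen distinct from the slid edge $xy$ thanks to the \emph{two} edges guaranteed by tight touching, match exactly the paper's use of the touching hypothesis.
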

\begin{proof}
{Let us first consider the case $k=1$. By \autoref{obs_bramble_tree}, trees are the connected graphs for which the maximum order of a tight bramble is $1$. Observe that, in a tree $T$, one searcher is enough to capture the fugitive. Suppose that the unique searcher is located at vertex $x$ and that the fugitive is located at edge $e$. Then the search strategy consists in sliding along the unique edge $xy$ incident to $x$ towards $e$: that is, $\s_T(\{x\},e)=y$ where $y$ is the unique neighbor of $x$ in the connected component of $T-x$ containing $e$. Clearly, eventually the fugitive will be located on an edge incident to a leaf of $T$ and will be captured.
}

{So let us consider $\mathcal{B}$ a tight bramble of $G$ of order $k\geq 2$.}
Let $\s_G\in \mathcal{S}_G\subseteq \big(2^{V(G)}\big)^{\big(2^{V(G)}\times E(G)\big)}$ be an arbitrary search strategy such that $\cost(\s_G)<k$. 
Let $B_0$ be an arbitrary tight bramble element and let $e_1$ be an edge of $E(B_0)$.
We build a fugitive strategy $(e_1,\f_G)\in\mathcal{F}_G$ such that the play $\play(\s_G,e_1,\f_G)=\langle S_0,e_1,S_1, \dots, S_{i-1},e_i,S_i,\dots \rangle$ generated by the program $\big(\s_G,(e_1,\f_G)\big)$ 
verifies that for every $i\geq 1$, $\rspace_G(S_{i-1},e_{i-1},S_i)\neq\emptyset$. 
Such a fugitive strategy $(e_1,\f_G)$ certifies that the  fugitive is never captured and thereby $\s_G$ is not winning.
To that aim, we show that for every $i\geq 1$, there exists $B_i\in\mathcal{B}$ such that $S_i\cap B_i=\emptyset$ and {$B_i\subseteq A_G(S_{i-1},e_i,S_i)$}. 
This allows us to select $e_{i+1}=\f_G(S_{i-1},e_i,S_i)\in B_i$.


We proceed by induction on $i$. Clearly, {since $k\geq 2$,} the property is satisfied for $i=1$. Indeed, as $S_0=\emptyset$, we have that $|S_1|=1$. Thereby $S_1$ is not a cover of $\mathcal{B}$. So there exists $B_1\in\mathcal{B}$ which is disjoint from $S_1$. 
Suppose that $B_0\neq B_1$ (otherwise, we are done). As $B_0$ and $B_1$ are touching connected subsets of vertices of $G$ and as $|S_1|=1$, we have that {$B_1\in A_G(S_0,e_1,S_1)$.} It follows that we can set $\f_G(S_0,e_1,S_1)=e_2$ with $e_2$ being any edge of $E(B_1)$.
Suppose that the property holds for every $j$ such that $1\leq j<i$. 
We distinguish three cases:
\begin{itemize}
\item $S_i=S_{i-1}\setminus\{u\}$ (that is, a searcher is removed from vertex $u$): It follows that $S_i\cap B_{i-1}=\emptyset$ and thereby we can set $B_i=B_{i-1}$ and $\f_G(S_{i-1},e_i,S_i)=e_{i+1}$ with $e_{i+1}$ being any edge of $E(B_i)$. Observe that the fugitive may not move at such a step.

\item $S_i=S_{i-1}\cup\{v\}$ (that is, a new searcher is placed on vertex $v$): Observe that if $v\notin B_{i-1}$, then $B_{i-1}\cap S_i=\emptyset$. As $e_i\in E(B_{i-1})$ and $B_{i-1}$ is connected, we also have that $B_{i-1}\subseteq A_G(S_{i-1},e_i,S_i)$. Thereby we can set $B_i=B_{i-1}$ and $\f_G(S_{i-1},e_i,S_i)=e_{i+1}$ with $e_{i+1}$ being any edge of $E(B_i)$. So assume that $v\in B_{i-1}$. As $|S_i|<k$, $S_i$ is not a cover of $\mathcal{B}$ and there exists $B_i\in\mathcal{B}$ such that $S_i\cap B_i=\emptyset$. Now observe that $S_i\cap B_{i-1}=\{v\}$. It follows that as $B_{i-1}$ and $B_i$ are touching, every edge in $E(B_i)$ is accessible through a $(S_{i-1},S_i)$-avoiding pathway from $e_i$, that is {$B_i\subseteq A_G(S_{i-1},e_i,S_i).$} Thereby we can set $\f_G(S_{i-1},e_i,S_i)=e_{i+1}$ with $e_{i+1}$ being any edge of $E(B_i)$.

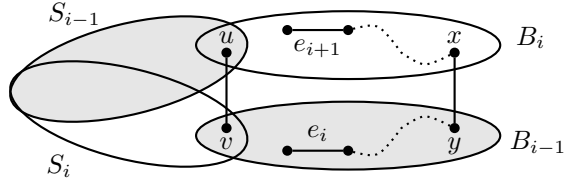
\begin{figure}[ht]
\begin{center}
\begin{tikzpicture}[thick,scale=1]
\tikzstyle{sommet}=[circle, draw, fill=black, inner sep=0pt, minimum width=3pt]

\draw[fill=gray!20] (1.6,-1.1) circle (2 and 0.45);
\draw[rotate=14,fill=gray!20] (-1.3,0.1) circle (1.6 and 0.6);

\draw[] (1.6,0.1) circle (2 and 0.45);
\draw[rotate=-14] (-1.05,-1.1) circle (1.6 and 0.6);

\draw node[sommet] (u) at (0,0){};
\draw node[sommet] (v) at (0,-1){};
\draw node[sommet] (x) at (3,0){};
\draw node[sommet] (y) at (3,-1){};

\node[above] at (0,0) {$u$};
\node[below] at (0,-1) {$v$};
\node[above] at (3,0) {$x$};
\node[below] at (3,-1) {$y$};

\draw (u) -- (v);
\draw (x) -- (y);

\draw node[sommet] (a1) at (0.8,0.3){};
\draw node[sommet] (a2) at (1.6,0.3){};
\draw (a1) -- (a2) ;
\node[below] at (1.2,0.3) {$e_{i+1}$};

\draw node[sommet] (b1) at (0.8,-1.3){};
\draw node[sommet] (b2) at (1.6,-1.3){};
\draw (b1) -- (b2) ;
\node[above] at (1.2,-1.3) {$e_{i}$};

\draw[dotted]  (a2) .. controls (2,0.4) .. (2.3,0) .. controls (2.6,-0.2) .. (x);
\draw[dotted]  (b2) .. controls (2,-1.4) .. (2.3,-1) .. controls (2.6,-0.8) .. (y);

\node[] at (-2,0.5) {$S_{i-1}$};
\node[] at (4,0.2) {$B_{i}$};
\node[] at (-2.2,-1.5) {$S_{i}$};
\node[] at (4.1,-1.2) {$B_{i-1}$};
\end{tikzpicture}
\end{center}
\caption{\label{fig_slide} Two consecutive sets $S_{i-1}$ and $S_i$ of vertices occupied by the searchers from the play $\play$ such that $S_i\ominus S_{i-1}=\{u,v\}\in E(G)$ is an edge of multiplicity one. The tight ramble element $B_{i-1}\in\mathcal{B}$ that is disjoint from $S_{i-1}$ 
contains the edge $e_i$. Likewise the tight bramble element $B_{i}\in\mathcal{B}$ that is disjoint from $S_{i}$ contains the edge $e_{i+1}$. We observe that if $B_{i-1}$ and $B_i$ are disjoint, as they are touching, then there exists a pathway from $e_i$ to $e_{i+1}$ going through the edge $xy$ that avoids the edge $uv$.}
\end{figure}

\item $S_i\ominus S_{i-1}=\{u,v\}$ is an edge of $E(G)$ (that is, a searcher slides on the edge $uv$ towards 
$v$): As in the previous case, observe that if $v\notin B_{i-1}$, then $B_{i-1}\cap S_i=\emptyset$. As $e_i\in 
E(B_{i-1})$ and $B_{i-1}$ is connected, we also have that $B_{i-1}\subseteq A_G(S_{i-1},e_i,S_i)$. Thereby we 
can set $B_i=B_{i-1}$ and $\f_G(S_{i-1},e_i,S_i)=e_{i+1}$ with $e_{i+1}$ being any edge of $E(B_i)$. So
assume that $v\in B_{i-1}$. As $|S_i|<k$, $S_i$ is not a cover of $\mathcal{B}$ implying the existence of 
$B_i\in\mathcal{B}$ such that $S_i\cap B_i=\emptyset$ (and so $v\notin B_i$). If $B_i\cap B_{i-1}\neq\emptyset$, then by the connectivity of the tight bramble elements, clearly every edge in $E(B_i)$ is 
accessible through a $(S_{i-1},S_i)$-avoiding pathway from $e_i$. Suppose that $B_i\cap B_{i-1}=\emptyset$ 
(see \autoref{fig_slide}). As $B_i$ and $B_{i-1}$ are touching, there exists an edge $xy$ distinct from $\{u,v\}$ 
such that $x\in B_i$ and $y\in B_{i-1}$. Again, the connectivity of the tight bramble elements implies that every 
edge in $E(B_i)$ is accessible through a $(S_{i-1},S_i)$-avoiding pathway from $e_i$ containing the edge $xy$. 
So in both cases, we have {$B_i\subseteq A_G(S_{i-1},e_i,S_i).$} 
Thereby we can set $\f_G(S_{i-1},e_i,S_i)=e_{i+1}$ with $e_{i+1}$ being any edge of $E(B_i)$. 
\vspace{-6mm}
\end{itemize}	
\end{proof}

\subsection{Monotone search strategies derived from loose tree-decompositions}

We show that a loose tree-decomposition of width $k$ for a graph $G$ can be used to design a winning search strategy $\s_G\in \mathcal{S}_G$ of cost $k$ {that is monotone}. This establishes $(2\Rightarrow 5)$ from~\autoref{th_min_max}.

\begin{lemma}\label{th_avms_la}
Let $G$ be a graph and $k\geq 0$ be an integer. If $\ctp(G)\leq k$, then $\mavms(G)\leq k$.
\end{lemma}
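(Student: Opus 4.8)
The plan is to turn a loose tree-decomposition into a monotone winning search strategy of cost $k$. By \autoref{th_ctp_width} the hypothesis $\ctp(G)\le k$ yields a loose tree-decomposition of $G$ of width at most $k$, and by \autoref{lem_full_ltd} we may assume it is a \emph{full} loose tree-decomposition $\mathcal{D}=(T,\chi)$ of width exactly $k$. (For $k\le 1$ the statement is immediate: an edge-less graph needs no searcher, and a forest is swept componentwise by a single sliding searcher exactly as in the first part of the proof of \autoref{th_escape}; so assume $k\ge 2$.) Fix a root $r$ of $T$; for a node $t$ write $T_t$ for the subtree of $T$ rooted at $t$ and $V_t=\bigcup_{s\in V(T_t)}\chi(s)$. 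The strategy $\s_G$ is a pursuit: the searchers always occupy some bag $\chi(t)$; they begin by placing the $k$ searchers on $\chi(r)$ one at a time (so the cost stays at most $k$ and all of $E(G[\chi(r)])$ becomes clear); and, observing the visible fugitive location $e$, they repeatedly move from the current bag $\chi(t)$ to the bag $\chi(c)$ of the unique child $c$ of $t$ inside whose subtree $T_c$ the edge $e$ is covered (either in a bag $\chi(s)$ with $s\in V(T_c)$, or as a marginal edge of a tree-edge of $T_c$, including the tree-edge $\{t,c\}$ itself).

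Since $\mathcal{D}$ is full, consecutive bags differ by a swap: $\chi(t)\ominus\chi(c)=\{x,x'\}$ with $x\in\chi(t)\setminus\chi(c)$ and $x'\in\chi(c)\setminus\chi(t)$. If $xx'\in E(G)$ — the unique marginal edge of $\{t,c\}$ — the step is a single slide of the searcher on $x$ along $xx'$ to $x'$, which in particular clears $xx'$. If $xx'\notin E(G)$, the step consists of two moves: remove the searcher from $x$, then place a searcher on $x'$; the intermediate position is $\chi(t)\cap\chi(c)$, which by \autoref{lem_tree_edge_sep} is a separator of $G$ confining the fugitive (which is in $V_c$) to $V_c$, so nothing escapes and the cost stays at most $k$. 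When $t$ reaches a leaf, $V_t=\chi(t)$, so the fugitive, confined to $T_t$, would have to sit on an edge of $G[\chi(t)]$, which is clear; hence the fugitive has been captured. As $T$ is finite, the play is finite, so $\s_G$ is winning.

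The heart of the proof — and the step I expect to be the main obstacle — is to verify that $\s_G$ is \emph{monotone} against every fugitive strategy, i.e.\ that no clear edge is ever re-entered. One maintains the invariant that, while the searchers occupy $\chi(t)$, all edges of $G[\chi(s)]$ for the nodes $s$ on the $r$–$t$ path and all marginal edges of the tree-edges on that path are clear, while the fugitive is confined to $T_c$ for a single child $c$ of $t$. The separator lemmas do most of the work: by \autoref{lem_node_sep} the full bag $\chi(t)$ is a separator, so the sibling subtrees of $t$ never get recontaminated; the ``remove-then-place'' step always keeps the separator $\chi(t)\cap\chi(c)$ on the board; and every edge inside a previously visited bag keeps an occupied endpoint. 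The genuinely delicate case is the marginal-edge step $xx'$: sliding the searcher off $x$ both clears $xx'$ and vacates $x$, and one must rule out the fugitive exploiting this to escape from $T_c$ back through $x'$ and then $x$ onto an already-clear edge incident to $x$, or to re-enter $xx'$ afterwards. This is precisely where the detailed definition of an $(S,S')$-avoiding pathway is needed — in particular its second condition, which forbids the slid edge $xx'$ from occurring as an internal edge of the pathway — so that the fugitive space $A_G(\chi(t),e,\chi(c))$ after the slide stays on the ``$T_c$-side'', together with a careful choice of \emph{when} the descent into $T_c$ is triggered (only once the fugitive is confined strictly below $c$, so that the move does not clear the fugitive's current edge). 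Granting this analysis, the set of clear edges is non-decreasing and the cost never exceeds $k$, so $\s_G$ is a monotone winning search strategy of cost at most $k$, whence $\mavms(G)\le k$.
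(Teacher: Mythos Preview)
Your proposal is correct and follows essentially the same approach as the paper: take a full loose tree-decomposition (via \autoref{th_ctp_width} and \autoref{lem_full_ltd}), have the searchers occupy a bag, then repeatedly shift to the neighbouring bag in the direction of the visible fugitive, using a slide when the symmetric difference is a marginal edge and a remove-then-place otherwise, appealing to \autoref{lem_node_sep} and \autoref{lem_tree_edge_sep} for the separation guarantees.

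The one noteworthy difference is in how monotonicity is argued. You maintain an explicit invariant about which edges are clear and single out the marginal-edge slide as the delicate case, even suggesting a modified trigger rule (``only once the fugitive is confined strictly below $c$''). The paper does not do this; instead it argues directly that the fugitive space $\rspace_G(S_{i-1},e_i,S_i)$ is strictly decreasing at every placement and every slide step, and merely non-increasing at a removal step (which is never followed by another removal). This global monotonicity of $\rspace$ simultaneously yields that the play is finite and that the program is monotone, without any special-case analysis of the slide and without the extra timing rule you introduce. Your concerns about the avoiding-pathway definition in the slide case are legitimate, but they are absorbed by the paper's $\rspace$-shrinking argument rather than by a separate invariant.
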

\begin{proof}
Suppose that $\ctp(G)\leq k$. By \autoref{th_ctp_width}, $G$ admits a full loose tree-decomposition $\mathcal{D}=(T,\chi)$ such that $\width(\mathcal{D})\leq k$. We build from $\mathcal{D}$ a search strategy 
$\s_G\in\mathcal{S}_G\subseteq \big(2^{V(G)}\big)^{\big(2^{V(G)}\times E(G)\big)}$ and prove that  it is a winning strategy of cost at most $k$. Consider an arbitrary  fugitive strategy $(e_1,\f_G)\in\mathcal{F}_G$ and let $\play(\s_G,e_1,\f_G)=\langle S_0, e_1, S_1,\dots, S_{i-1}, e_i, S_i,\dots\rangle$ 
be the play generated by the program
$\big(\s_G,(e_1,\f_G)\big)$. For $i\geq 1$, the searchers' position $S_i=\s_G(S_{i-1},e_i)$ is defined as follows. {At every step $i$, we guarantee the existence of a node $t\in V(T)$ such that $S_i\subseteq \chi(t)$.} Pick a vertex $x\in V(G)$ and consider a node $t\in V(T)$ such that $x\in\chi(t)$. We set $S_1=\{x\}$ and for $1<i\leq \ell=|\chi(t)|$, $S_i=S_{i-1}\cup\{y\}$ for some $y\in \chi(t)\setminus S_{i-1}$. That is, up to step $\ell$, we iteratively add a searcher on every vertex of $\chi(t)$.
Suppose that $i> \ell$. Let $t=V(T)$ be the node such that $S_{i-1}\subseteq \chi(t)$ and let $f=\{t_1,t_2\}$ be the tree-edge of $T$ such that {$e_i=\f_G(S_{i-2},e_{i-1},S_{i-1})\in E(G[\chi(t_1)\cup\chi(t_2)]).$} Suppose that $t_1$ is closer to $t$ in $T$ than $t_2$ is. Let $t'$ be the neighbour of $t$ in $T$ such that $t'$ and $t_2$ are in the same connected component of $T-tt'$. Observe that we may have $t=t_1$ and thereby $t'=t_2$. As $\mathcal{D}$ is full, there are $u\in \chi(t)$ and $v\in\chi(t')$ such that $\{u,v\}=\chi(t)\ominus\chi(t')$. There are two cases to consider:
\begin{enumerate}
\item if $uv\in E(G)$,
then  $S_i=\s_G(S_{i-1},e_i)=\chi(t')$ is obtained by sliding along the edge $uv$ (towards $v$);
\item if $uv\notin E(G)$, then $S_i=\s_G(S_{i-1}, e_i)=\chi(t)\setminus\{u\}$ is obtained by removing a searcher from vertex $u$, and $S_{i+1}=\s_G(S_{i}, e_{i+1})=\chi(t')$, with $e_{i+1}=\f_G(S_{i-1},e_i,S_i)$, is obtained by adding a searcher on vertex $v$.
\end{enumerate}
Let us prove that $\s_G$ is a winning search strategy. To that aim, we show that there is some step $i\geq 1$ such that {$\rspace_G(S_{i-1},e_{i},S_{i})=\emptyset.$} 
First observe, that if at step $i$ a searcher is added, that is $S_i=S_{i-1}\cup\{u\}$ for some $u\in V(G)$, then {$\rspace_G(S_{i-1},e_i,S_{i})\subsetneq \rspace_G(S_{i-2},e_{i-1},S_{i-1}).$} Now 
suppose that at step $i$, a searcher slides on the edge $uv$  where $\{u,v\}=\chi(t)\ominus\chi(t')$. By \autoref{lem_node_sep}, $\chi(t)$ is a separator of $G$. Observe that, by construction of $\s_G$, either
$e_i=uv$ or $e_i$ and $v$ belongs to the same component of $G-\chi(t)$. It follows that {$\rspace_G(S_{i-1},e_i,S_{i})\subsetneq \rspace_G(S_{i-2},e_{i-1},S_{i-1}).$} 
Finally, suppose that at step $i$, a searcher is removed from a vertex $u\in S_{i-1}$. By construction of $\s_G$, this is only the case where the set $\{u,v\}=\chi(t)\ominus\chi(t')$ does not correspond to an edge of $E(G)$. By \autoref{lem_tree_edge_sep}, $\chi(t)\cap\chi(t')$ is a separator of $G$. 
Observe in that case that {$\rspace_G(S_{i-1},e_{i},S_{i})= \rspace_G(S_{i-2},e_{i-1},S_{i-1}).$} But as, by construction, $\s_G$ 
does not contain two consecutive steps removing a searcher, we can conclude that there exists some step $i$ such that {$\rspace_G(S_{i-1},e_{i},S_{i})=\emptyset.$} 

Moreover observe that, from the discussion above, for every $(e_1,\f_G)\in\mathcal{F}_G$, the play generated by the program $\big(\s_G,(e_1,\f_G)\big)$ is monotone. This, in turns, implies that the strategy $\s_G$ is monotone. Finally we observe that by the  construction of $\s_G$, we have $\cost(\s_G)=\ctp(G)\leq k$, proving the result.
\end{proof}

\section{Discussion}

In this paper, we defined the concepts of loose tree-decomposition and tight bramble. We showed that the existence of a tight bramble of large order in a graph is an obstacle to a loose tree-decomposition of small width. Moreover, we proved the equivalence of the corresponding parameters to the Cartesian product number, introduced as the \emph{largeur d'arboresence} by Colin de Verdi\`ere~\cite{ColinDeVerdiere98multi} (see also \cite{Harvey14ontree}) and the mixed search number against an agile and visible fugitive (\autoref{th_min_max}).

One may consider path-like counterparts  of all these concepts as follows.  
We can restrict the definition of loose tree-decomposition to path instead of tree, then we obtain {\em loose path-decompositions}. Similarly, if in the definition of  Cartesian tree product number, we again consider trees instead of paths, this yields the definiton of the \emph{Cartesian path product number}. 
A ``path analogue'' of \autoref{th_min_max} may easily be produced by considering the 
{\em mixed search number against an agile and invisible fugitive} introduced by Bienstock and Seymour in~\cite{BienstockS91Monot}, while the min-max analogue for these path-like parameters
can be derived by the framework of \cite{FominT03Onthe}. 
Finally, we wish to mention that
a different  variant for path-decomposition that is equivalent the above parameters was introduced 
by Takahashi and  Ueno and  Kajitani in~\cite{TakahashiUK95Mixed,TakahashiUK94minim} under the name {\sl proper path decomposition}.

Finally, our results may support the belief that it might be  possible to derive the min-max duality of \autoref{th_min_max} using general parameter duality frameworks such as those in   \cite{Pardo13purs,LyaudetMT10,AminiMNT09subm,MazoitN08mono}.

\paragraph{Acknowledgments.} We would like to thank the anonymous referees for their careful reading and accurate remarks that helped us to greatly improve the presentation of the manuscript.


%
%

\bibliographystyle{plainurl}

\appendix


\end{document}